\patchcmd{\section}{\scshape}{\bfseries}{}{}
\renewcommand{\@secnumfont}{\bfseries}
\patchcmd{\section}{\normalfont}{\normalfont\color{MidnightBlue}}{}{}
\patchcmd{\subsection}{\normalfont}{\normalfont\color{MidnightBlue}}{}{}
\def\subsubsection{\@startsection{subsubsection}{3}%
\z@{.5\linespacing\@plus.7\linespacing}{-.5em}%
{\normalfont\bfseries}}
\newlength{\fixboxwidth}
\renewcommand{\algorithmiccomment}[1]{\bgroup\hfill//~#1\egroup}
\numberwithin{equation}{section}
\renewcommand\subsubsection{\@secnumfont}{\bfseries}%
\renewcommand\subsubsection{\@startsection{subsubsection}{3}
  \z@{.5\linespacing\@plus.7\linespacing}{-.5em}%
  {\normalfont\bfseries}}
\def\g{\mathfrak{g}}
\def\curl{\operatorname{\bf curl}}
\def\R{\mathbb{R}}
\def\cN{\mathcal{N}}
\def\S{\mathcal{S}}
\def\Y{{\bf\mathcal{Y}}}
\def\E{\mathbb{E}}
\def\C{{\mathcal{C}}}
\def\G{\mathcal{G}}
\def\L{\mathcal{L}}
\def\s{\mathfrak{s}}
\def\H{\mathcal{H}}
\def\T{\mathbb{T}}
\def\restrict#1{\raise-.5ex\hbox{\ensuremath|}_{#1}}
\def\<{\big\langle}
\def\>{\big\rangle}
\def\diiv{\operatorname{\bf div}}
\def\Var{\operatorname{Var}}
\def\Tr{\operatorname{Tr}}
\def\Cov{\operatorname{Cov}}
\def\Hess{\operatorname{Hess}}
\def\s{\mathfrak{s}}
\definecolor{red}{rgb}{0.9, 0, 0}
\definecolor{green}{rgb}{0.0, 1.0, 0.0}
\newtheorem{Theorem}{Theorem}[section]
\newtheorem{Proposition}[Theorem]{Proposition}
\newtheorem{Remark}[Theorem]{Remark}
\newcommand{\oset}[3][0ex]{%
  \mathrel{\mathop{#3}\limits^{
    \vbox to#1{\kern-2\ex@
    \hbox{$\scriptstyle#2$}\vss}}}}
\newcommand{\uset}[3][0ex]{%
  \mathrel{\mathop{#3}\limits_{
    \vbox to#1{\kern-2\ex@
    \hbox{$\scriptstyle#2$}\vss}}}}
\begin{document}
\title[Gaussian Process Hydrodynamics ]{Gaussian Process Hydrodynamics}

\date{\today}

\newcommand{\ho}[1]{{\color{blue} #1}}

\author{Houman Owhadi}

\thanks{Caltech,  MC 9-94, Pasadena, CA 91125, USA, owhadi@caltech.edu}

\maketitle

\begin{abstract}
We present a Gaussian Process (GP) approach (Gaussian Process Hydrodynamics, GPH) for approximating the solution of the Euler and Navier-Stokes equations. As in Smoothed Particle Hydrodynamics (SPH), GPH is a Lagrangian particle-based approach involving the tracking of a finite number of particles transported by the flow. However, these particles do not represent mollified particles of matter but carry discrete/partial information about the continuous flow. Closure is achieved by placing a divergence-free GP prior $\xi$ on the velocity field and conditioning on vorticity at particle locations. Known physics (e.g., the Richardson cascade and velocity-increments power laws) is incorporated into the GP prior through physics-informed additive kernels. This is equivalent to expressing $\xi$ as a sum of independent GPs $\xi^l$, which we call modes, acting at different scales (each mode $\xi^l$  self-activates to represent the formation of eddies at the corresponding scales). This approach leads to a quantitative analysis of the Richardson cascade through the analysis of the activation of these modes and allows us to coarse-grain turbulence in a statistical manner rather than a deterministic one. Since GPH is formulated on the vorticity equations, it does not require solving a pressure equation. By enforcing incompressibility and fluid/structure boundary conditions through the selection of the kernel, GPH requires much fewer particles than SPH. Since GPH has a natural probabilistic interpretation, numerical results come with uncertainty estimates enabling their incorporation into a UQ pipeline and the adding/removing of particles (quantas of information) in an adapted manner. The proposed approach is amenable to analysis, it inherits the complexity of state-of-the-art solvers for dense kernel matrices, and it leads to a natural definition of turbulence as information loss. Numerical experiments support the importance of selecting physics-informed kernels and illustrate the major impact of such kernels on accuracy and stability. Since the proposed approach has a Bayesian interpretation, it naturally enables data assimilation and making predictions and estimations based on mixing simulation data with experimental data.
\end{abstract}

\section{Introduction}
The Navier-Stokes (NS) equations are not only difficult to analyze \cite{fefferman2000existence}, the emergence of multiple nonlinearly coupled scales makes them hard to approximate numerically. Even from a physicist's perspective, they remain poorly understood, and we still do not have a clear definition of turbulence beyond ``the complex, chaotic motion of a fluid'' \cite{phillips2018turbulence}.
The NS equations are also hard to solve because they contain a dual description of the underlying physics that is Lagrangian in its representation of Newton's second law and Eulerian in its description of the pressure equation.
Classical methods for solving the NS equations are correspondingly divided into Eulerian (grid-based) and Lagrangian (meshfree particle-based) methods.
While Eulerian methods are efficient in handling pressure equations, they require high resolutions to handle the Lagrangian effects of the equations.
While Lagrangian methods are efficient in replicating conservation laws (e.g., entropy, momentum, energy), they require a large number of particles to handle the Eulerian aspects of the equations (e.g., solve for pressure given the position/velocities of the particles).

\subsection{Smoothed Particle Hydrodynamics}
Smoothed Particle Hydrodynamics (SPH) is a prototypical Lagrangian  meshfree particle  method  (where continuum is
assumed to be a collection of imaginary particles)  introduced in the late 1970s for  astrophysics problems \cite{lucy1977numerical,gingold1977smoothed} (see \cite{monaghan1992smoothed} for a review).
Although SPH has, by now, been  widely applied to different areas in engineering
and science (see \cite{liu2010smoothed} for an overview), including computational fluid dynamics (CFD), it suffers from the difficulties associated with Lagrangian methods and
``still requires development to
address important elements which prevent more widespread use'' \cite{vacondio2021grand}. These elements (identified as grand challenges in \cite{vacondio2021grand}) include  (1) convergence, consistency and stability, (2) boundary
conditions, (3) adaptivity, (4) coupling to other models, and (5) applicability to industry.

\subsection{Gaussian Process Hydrodynamics}
The purpose of this paper is to introduce Gaussian Process Hydrodynamics (GPH) as an information/inference-based approach to approximating the NS equations. Although numerical approximation and statistical inference may be seen as  separate subjects,
they are intimately connected through the common purpose of making estimations with partial information \cite{OwhScoSchNotAMS2019}, and
Kernel/GP methods provide a natural (and minimax optimal \cite{owhadi2019operator}) approach to computing with missing information.
 In the proposed GPH approach, flow-advected particles carry partial information on the underlying vorticity/velocity fields, and (information gap) closure is achieved by randomizing the underlying velocity field via a Gaussian Process (GP) prior with a physics-informed kernel, ensuring that incompressibility and boundary conditions are exactly satisfied, and power/scaling and energy transfer laws are satisfied in a statistical manner.
From this perspective, turbulence can be defined and quantified as information loss between the true dynamic of the NS equations and the one resulting from carrying only partial information on the underlying fields.
Although GPH has similarities with SPH, it also has several significant differences: (1) In SPH, particles represent mollified particles of matter; in GPH, particles represent discrete/partial information about the continuous flow. (2) SPH is typically formulated on the velocity and requires solving a pressure equation; GPH is formulated on the vorticity equations and Eulerian aspects (e.g., recovering the velocity field) are handled through Gaussian Process Regression. (3) By enforcing incompressibility and fluid/structure boundary conditions through the selection of the kernel, GPH requires much fewer particles. (4) By carrying variance information, GPH enables adding and removing quantas of information from the flow in an adapted manner.

While SPH recovers fields through smooth approximations of delta Dirac functions with compactly supported kernels, the focus of GPH is on the optimal recovery \cite{micchelli1977survey, owhadi2019operator} of the missing information with adapted/programmed kernels \cite{owhadi2019kernelmd}.
Its representation of the multiscale structure of the flow through regression additive kernels enables a corresponding statistical decomposition of the flow at different scales (modes), and a quantitative analysis of the Richardson cascade through the analysis of the activation of these modes \cite{owhadi2019kernelmd}.
Its focus on informing the kernel about the underlying physics and boundary conditions
opens a different strategy for solving some of the grand challenges of SPH listed above. Its probabilistic/Bayesian interpretation enables its incorporation into Uncertainty Quantification (UQ) pipelines.

\subsection{Vortex methods}
Since GPH resembles vortex methods \cite{leonard1980vortex, cottet2000vortex} (due to its formulation on the vorticity equations), it can also be interpreted as a generalization of such methods to arbitrary kernel approximations of the underlying vorticity and velocity fields based on discrete vorticity information carried by the Lagrangian particles. However, the velocity field is not recovered from the continuous vorticity field using the Biot-Savart law but from the available partial information about the continuous vorticity field using kernel (GPR) representer formulas.

\subsection{Solving PDEs as  learning problems}
There are essentially two main approaches to solving PDEs as learning problems: (1) ANN-based approaches with Physics Informed Neural Networks \cite{raissi2019physics, karniadakis2021physics} as a prototypical example, and (2) GP-based approaches with Gamblets \cite{Owhadi:2014, OwZh:2016, owhadi2017multigrid} as a prototypical example.
Although  GP-based approaches are more theoretically well-founded \cite{owhadi2019operator} and have a long history of interplays with numerical approximation \cite{OwhScoSchNotAMS2019, schafer2021sparse, SchaeferSullivanOwhadi17, yoo2019noising}, they were essentially limited to linear/quasi-linear/time-dependent PDEs and were only recently generalized to arbitrary nonlinear PDEs \cite{chen2021solving} (and to computational graphs \cite{owhadi2022computational}).

\subsection{Physics-informed kernels}
While both ANN and GP methods replace the solution of the PDE with an ANN/GP and are physics-informed by constraining/conditioning the ANN/GP to satisfy the PDE over a finite number of degrees of freedom (e.g., collocation points), GP methods can also be physics-informed through their kernels \cite{Owhadi:2014}.
The importance of employing physics/PDE-informed is well understood in numerical approximation/homogenization using Darcy's elliptic PDE $-\diiv(a\nabla)$ (with rough conductivity $a$) as a prototypical example. Indeed, while employing a smooth kernel may lead to arbitrary bad convergence \cite{BaOs:2000}, employing a physics-informed kernel ensures an optimal rate of convergence \cite{Owhadi:2014}.
While \cite{Owhadi:2014} proposed to identify such kernels by
 filtering white noise through the solution operator of the PDE (i.e., replacing the right-hand side/source term with white noise and conditioning the resulting randomized solution on a finite number of linear measurements), this approach is not practical for nonlinear PDEs since the resulting solution is not a GP.

The approach proposed in this paper is to select a physics-informed kernel by programming the kernel \cite{owhadi2019kernelmd} to satisfy: (1) the divergence-free condition of the velocity field, (2) boundary conditions (3) statistical power laws, and (4) the Richardson cascade of turbulence.

\subsection{Outline of the article}
The remainder of the article is organized as follows.
Sec.~\ref{secsetup} and \ref{secGPH} introduce GPH in the setting of the vorticity formulation of the forced Navier-Stokes equations.
Sec.~\ref{secdivfreegp} describes representer formulas for the underlying GP formulation with divergence-free kernels.
Sec.~\ref{secpowerlaws} describes the design of physics-informed kernels for GPH.
Sec.~\ref{secaccap} quantifies the accuracy of the proposed approach as the $L^2$ norm of its residual,  interprets that residual as an instantaneous measure  of information loss (resulting from the discretization of the continuous dynamics), and presents an information loss interpretation and quantification of turbulence.
Sec.~\ref{numericalexp} presents numerical experiments.
Throughout all these sections, we will use figures/simulations from Sec.~\ref{numericalexp} to illustrate the proposed method and refer to  Sec.~\ref{numericalexp} for their detailed descriptions and to \url{https://www.youtube.com/user/HoumanOwhadi} for corresponding animations.

\section{Set up}\label{secsetup}
Let $\T^d$ be the torus of side length $2\pi$ and dimension $d=2$ or $d=3$.
Consider the forced Navier-Stokes equations on $\T^d$
\begin{equation}\label{eqeodjdoije}
\begin{cases}
\partial_t u + u\nabla u=\nu \Delta u-\nabla p +f &\text{ on } \T^d\\
\diiv u=0&\text{ on } \T^d\,,
\end{cases}
\end{equation}
with smooth zero-mean flow\footnote{$u_0\in C^\infty(\T^d)$ and $\int_{\T^d}u_0(x)\,dx=0$. $f\in C^\infty(\T^d\times [0,\infty))$ and $\int_{\T^d}f(x,t)\,dx=0$ for all $t$.} initial conditions $u(x,0)=u_0(x)$ and
external volumetric force $f(x,t)$.

Introducing the vorticity
\begin{equation}
\omega(x,t):=\curl\, u(x,t)\,,
\end{equation}
and $g(x,t)=\curl\, f(x,t)$,

 \eqref{eqeodjdoije} is equivalent to the  equations
 \begin{align}
{\bf (d=2)}\quad &\partial_t \omega + u\nabla \omega=\nu \Delta \omega + g(x,t)\,,\label{eqvor2d}\\
{\bf (d=3)}\quad &\partial_t \omega + u\nabla \omega=\nu \Delta \omega+\omega \nabla u + g(x,t)\label{eqvor3d}\,,
\end{align}
with initial condition $\omega(x,0)=\omega_0(x):=\curl\, u_0(x)$.

\begin{figure}[h]
    \centering
        \includegraphics[width=\textwidth ]{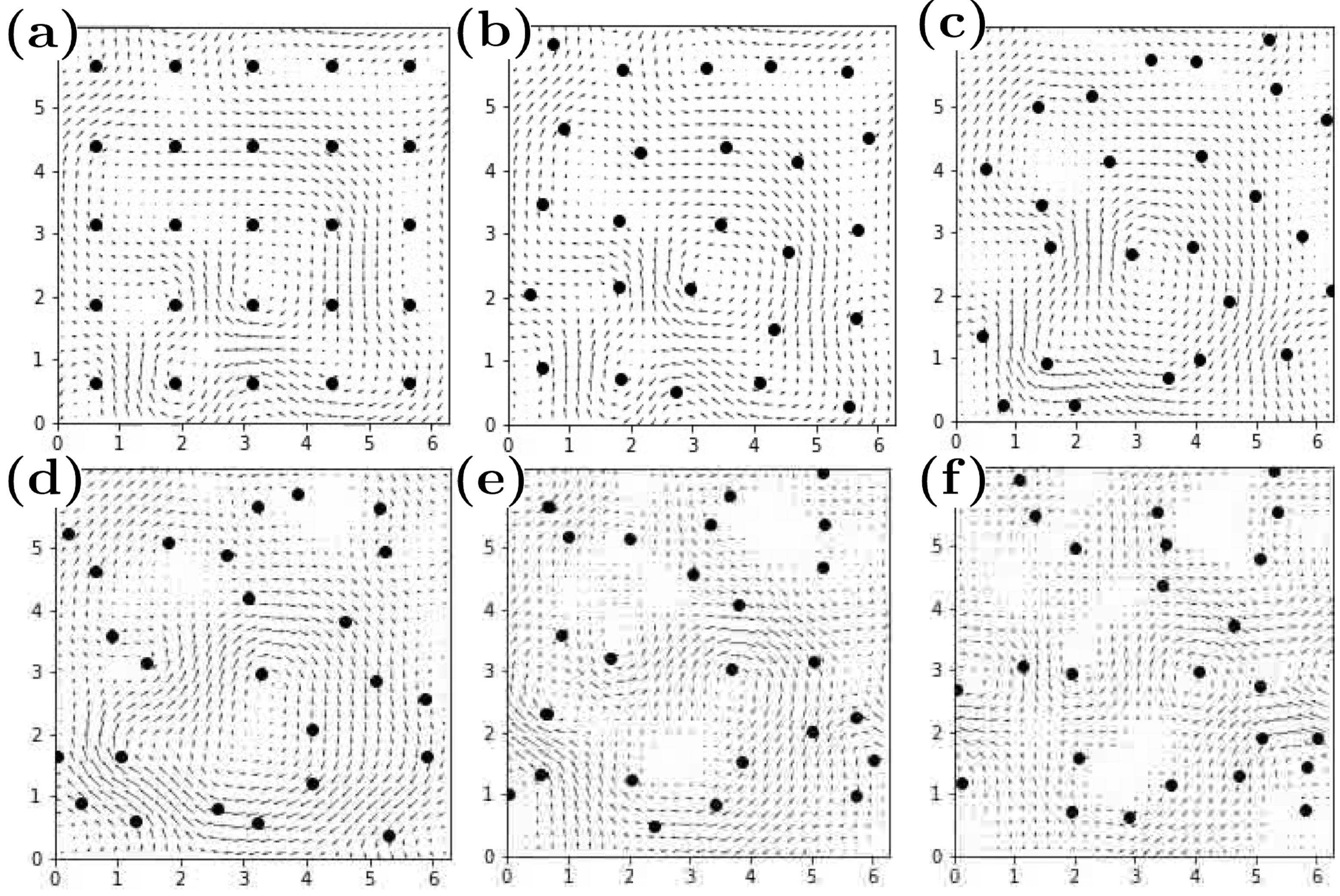}
    \caption{Velocity snapshots at times $t=i*0.3$ for $i=0,1,\ldots,5$.}
    \label{fiex2}
\end{figure}

\begin{figure}[h]
    \centering
        \includegraphics[width=\textwidth ]{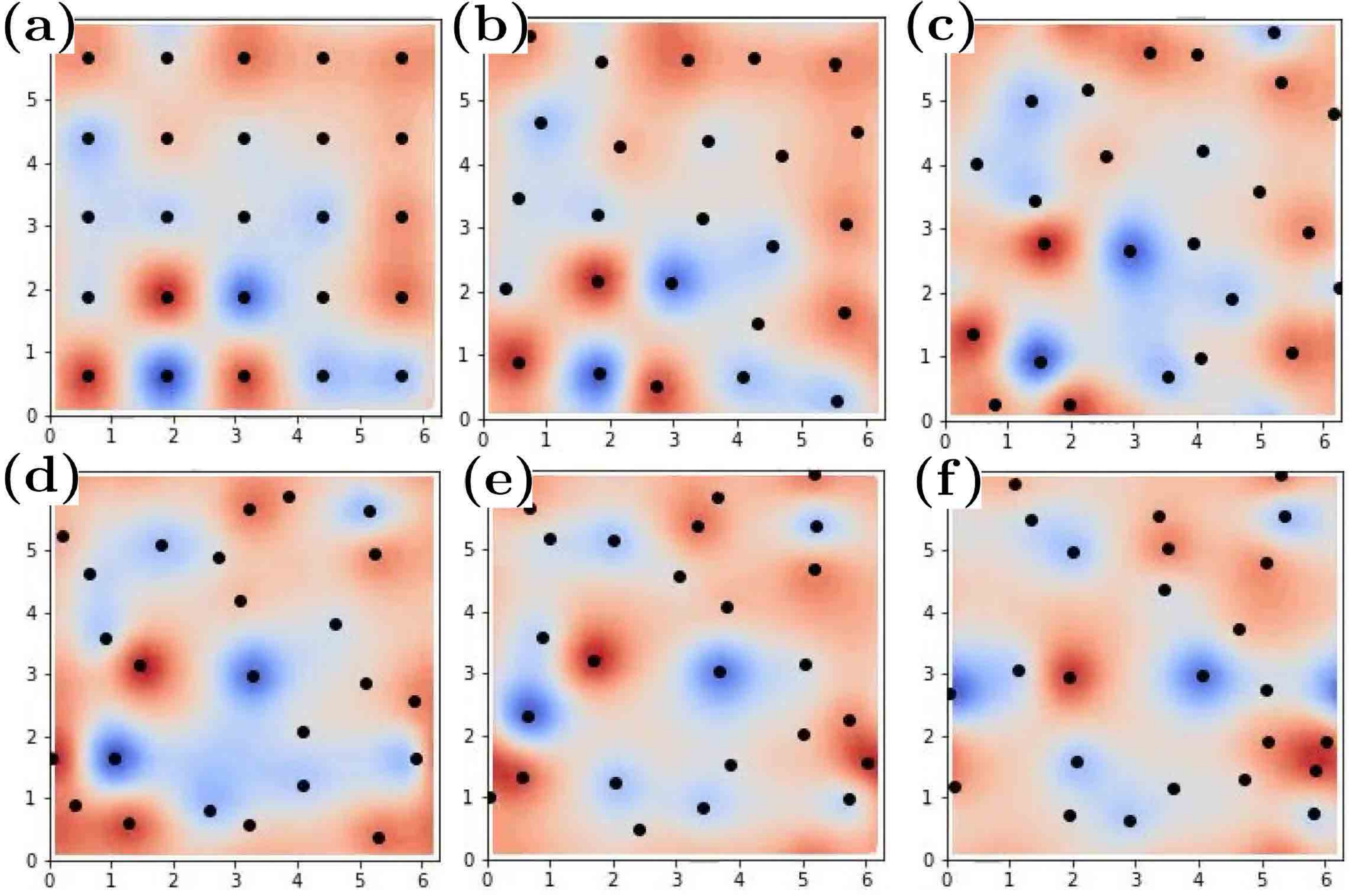}
    \caption{Vorticity snapshots at times $t=i*0.3$ for $i=0,1,\ldots,5$.}
    \label{fiex3}
\end{figure}

\begin{figure}[h]
    \centering
        \includegraphics[width=\textwidth ]{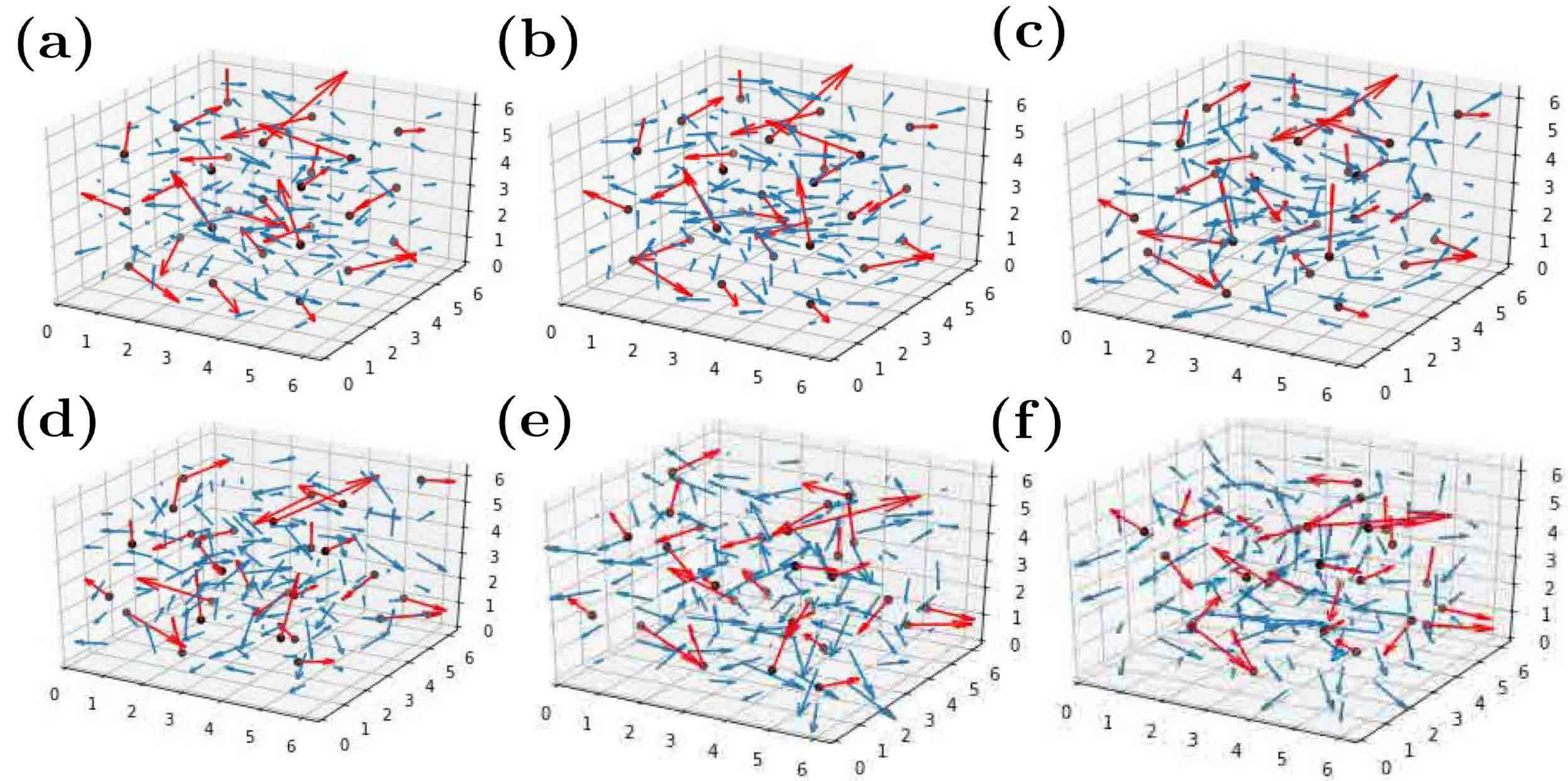}
    \caption{{\bf $d=3$}. Velocity and vorticity snapshots at times $t=i*0.3$ for $i=0,1,\ldots,5$. Blue arrows show velocity, red arrows show vorticity $W$ at particle locations $q$.}
    \label{fiex3d}
\end{figure}

\begin{figure}[h]
    \centering
        \includegraphics[width=\textwidth ]{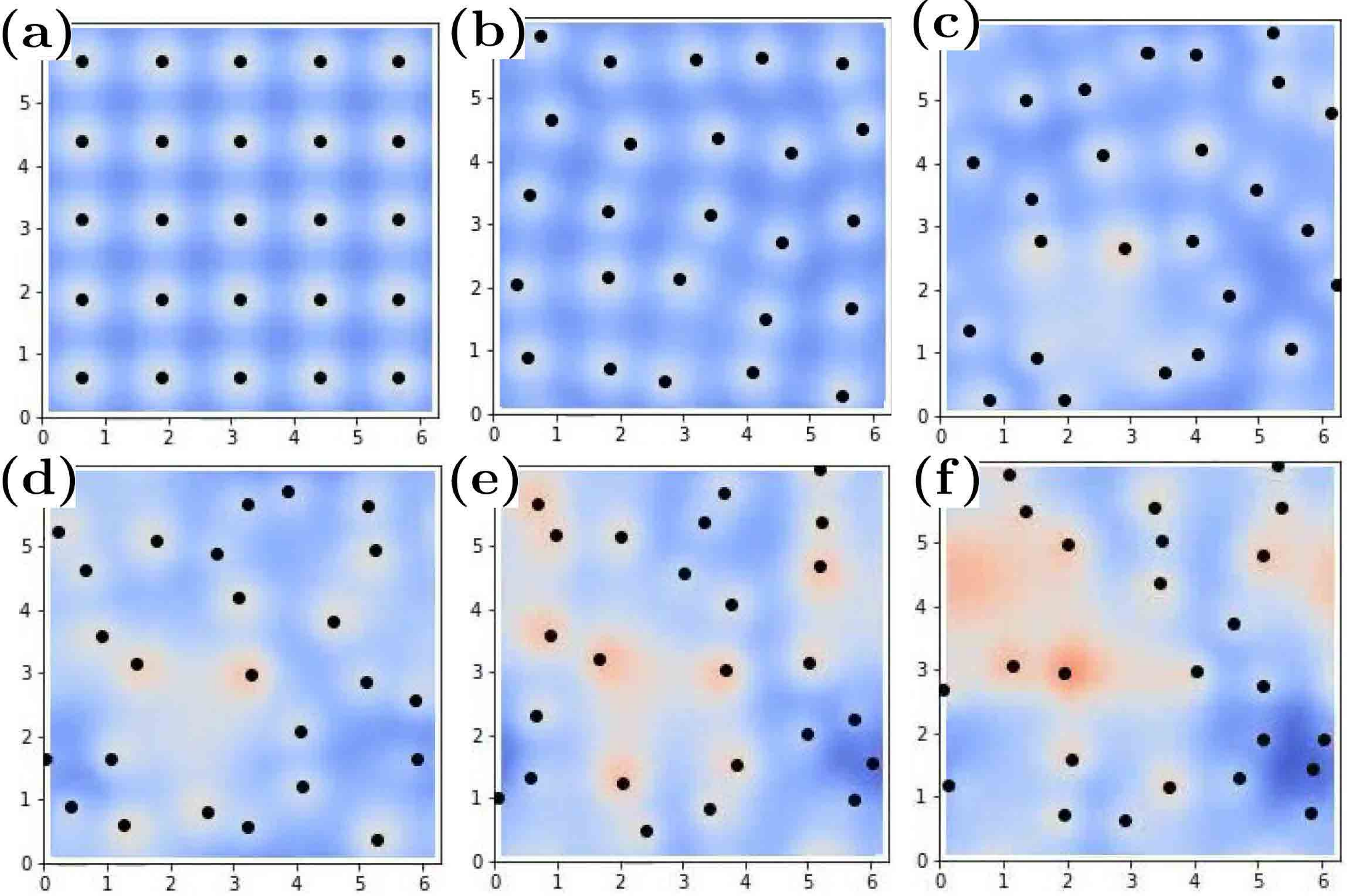}
    \caption{Variance velocity at times $t=i*0.3$ for $i=0,1,\ldots,5$. The color scale is $4.8$ (blue) to $8.4$ (red).}
    \label{fiex4}
\end{figure}

\begin{figure}[h]
    \centering
        \includegraphics[width=\textwidth ]{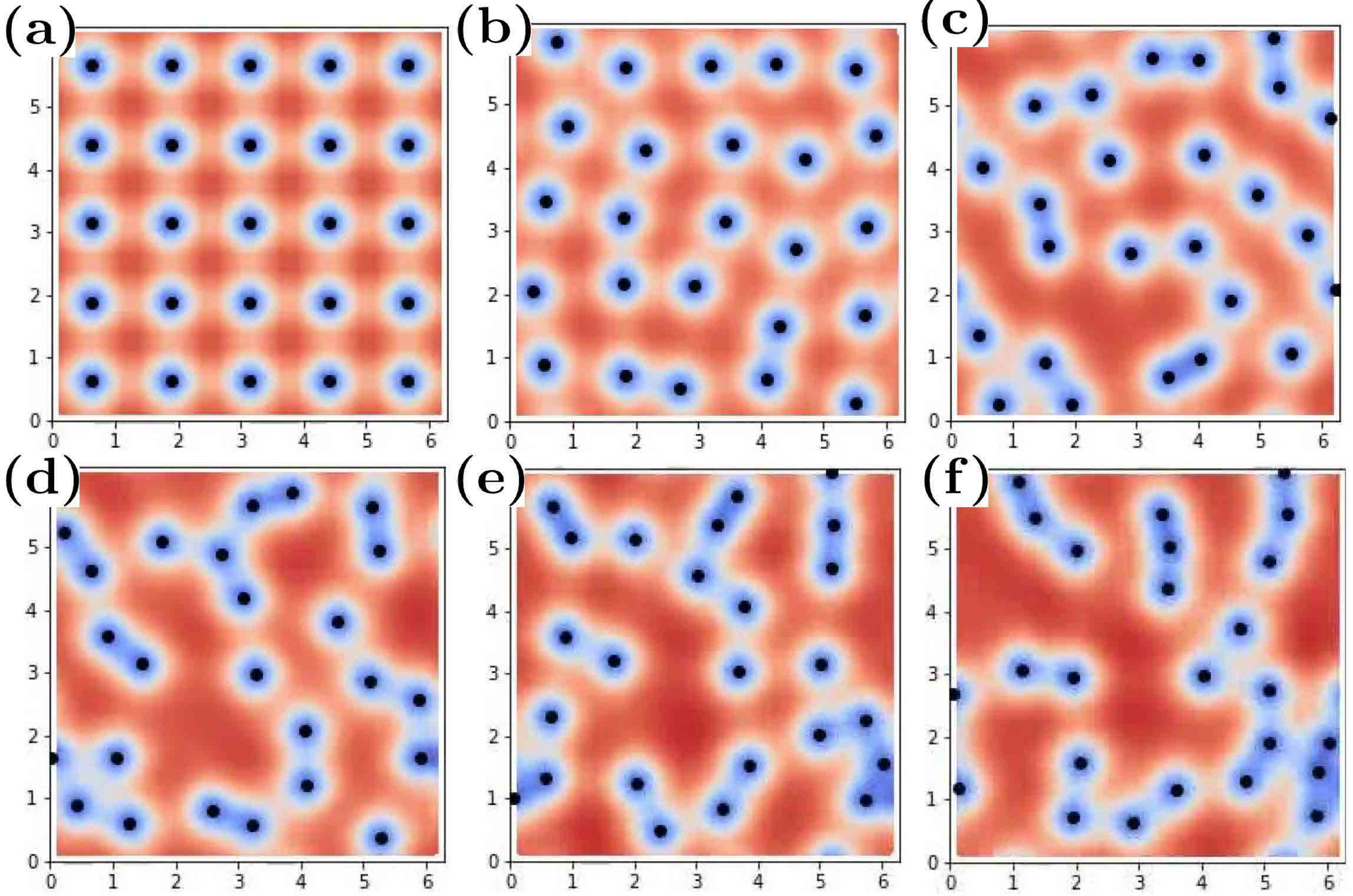}
    \caption{Variance vorticity at times $t=i*0.3$ for $i=0,1,\ldots,5$. The color scale is $0$ (blue) to $45$ (red).}
    \label{fiex5}
\end{figure}

\section{Gaussian Process Hydrodynamics}\label{secGPH}
Let $X_1,\ldots,X_N$ be $N$ distinct (and possibly homogeneously distributed) collocation points $X_i$ in $\T^d$.
For $i\in \{1,\ldots,N\}$ let $t\rightarrow q_i(t)$ be  the trajectory formed by a particle advected by the flow velocity $u(x,t)$, defined as the solution of
\begin{equation}\label{eqjkhhed}
\dot{q}_i(t)=u\big(q_i(t),t\big)
\end{equation}
with initial condition $q^0_i=X_i \in \T^d$.
For $i\in \{1,\ldots,N\}$ let
\begin{equation}\label{eqjhedjhddgy}
W_i(t):=\omega(q_i(t),t)
\end{equation}
be the value of the vorticity at $(q_i(t),t)$.  \eqref{eqjkhhed}, \eqref{eqvor2d} and \eqref{eqvor3d} imply that $t\rightarrow W_i(t)$ solves the ODE
 \begin{align}
{\bf (d=2)}\quad &\dot{W}_i(t)=\nu \Delta \omega(q_i(t),t)+g(q_i(t),t) \,,\label{eqW2d}\\
{\bf (d=3)}\quad &\dot{W}_i(t)=\nu \Delta \omega(q_i(t),t)+W_i(t) \nabla u(q_i(t),t)+g(q_i(t),t) \label{eqW3d}\,,
\end{align}
with initial condition $W_i(0)=\omega_0(q_i(0))$.
Write $q(t):=(q_1(t),\ldots,q_N(t))$ and $W(t):=(W_1(t),\ldots,W_N(t))$.
Since $(q,W)$ only provide partial information on $u$ and its partial derivatives, \eqref{eqW2d} and \eqref{eqW3d} are not autonomous systems and closing them requires closing the information gap between $(q,W)$ and $u$, i.e. approximating $u(x,t)$ and its partial derivatives as a function of $(q,W)$. Our approach to this closure problem is to replace the unknown velocity field $u$ by a centered Gaussian Process (GP) $\xi\sim \cN(0,K)$ (with a physics-informed matrix-valued kernel $K$ that may be non-stationary to incorporate non-periodic boundary conditions) and approximate $u$  with the conditional expectation of $\xi$ given the information \eqref{eqjhedjhddgy}. To describe this, let
 \begin{align}
{\bf (d=2)}\quad &\Y:=(\T^2)^N \times \R^N \,,\label{eqY2d}\\
{\bf (d=3)}\quad &\Y:=(\T^3)^N \times (\R^3)^N \label{eqY3d}\,,
\end{align}
be the phase space containing the trajectory $t\rightarrow (q,W)(t)$. Define
\begin{equation}\label{eqkhgdkejgedejhge}
u^\star\big(x,q,W\big):=\E\big[\xi(x)\big| \curl\, \xi(q)=W\big]\text{ for } (x,q,W)\in \T^d \times \Y\,
\end{equation}
where, using vectorized notations, we have written $\curl\, \xi(q)$ for the $N$-vector with entries $\curl\, \xi(q_i)$.
We then  approximate $(q,W)(t)$ with $(q^\star,W^\star)(t)$,  $u(x,t)$ with
\begin{equation}
\bar{u}(x,t):=u^\star(x,q^\star(t),W^\star(t))\,,
\end{equation}
and  $\omega(x,t)$ with
\begin{equation}
\bar{\omega}(x,t):=\curl\, u^\star(x,q^\star(t),W^\star(t))\,,
\end{equation}
where $(q^\star,W^\star)$ is the solution of the autonomous system of ODEs\footnote{The differential operators $\Delta \curl$ and $\nabla$ in \eqref{eqodesys} act on the first argument $x$ of $u^\star$ in \eqref{eqkhgdkejgedejhge}.}
 \begin{equation}\label{eqodesys}
  \begin{cases}
&\dot{q}^\star_i=u^\star\big(q_i^\star,q^\star,W^\star\big)\\
{\bf (d=2)}\quad &\dot{W}_i^\star(t)=\nu \Delta \curl\, u^\star\big(q_i^\star,q^\star,W^\star\big)+g(q_i^\star(t),t) \,,\\
{\bf (d=3)}\quad &\dot{W}_i^\star(t)=\nu \Delta \curl\, u^\star\big(q_i^\star,q^\star,W^\star\big)+W_i^\star \nabla u^\star\big(q_i^\star,q^\star,W^\star\big)+g(q_i^\star(t),t)\,,
\end{cases}
\end{equation}
with the initial condition $(q^\star,W^\star)(0)=(q,W)(0)=(q^0,\omega_0(q^0))$.
See Fig.~\ref{fiex2} and \ref{fiex3} for snapshots of $\bar{u}$ (shown as a vector field), $\bar{\omega}$ (shown as a heatmap) and $\bar{q}$ (shown as dark points). For $d=3$, see Fig.~\ref{fiex3d} for snapshots of $\bar{u}$ (shown as blue arrows), $W$ (shown as red arrows) and $\bar{q}$ (shown as dark points).

The proposed approach comes with Uncertainty Quantification (UQ) estimates and is compatible with a UQ pipeline. In particular, given $\curl\,\xi(q)=W$,
 $\xi$ is a GP with conditional mean $u^\star$ and conditional covariance function
\begin{equation}\label{eqcovcu}
\C^u(x,y):=\E\Big[\big(\xi(x)-u^\star(x,q,W)\big)\big(\xi(y)-u^\star(x,q,W)\big)^T\Big|\curl\,\xi(q)=W \Big]\,,
\end{equation}
and $\curl\,\xi$ is a GP with conditional mean $\curl\,u^\star$ and conditional covariance function
{
\begin{equation}\label{eqcovcomega}
\C^\omega(x,y):=\E\Big[\big(\curl\,\xi(x)-\curl\,u^\star(x,q,W)\big)\big(\curl\,\xi(y)-\curl\,u^\star(x,q,W)\big)^T\Big|\curl\,\xi(q)=W \Big]\,.
\end{equation}
}
See Fig.~\ref{fiex4} and \ref{fiex5} for snapshots of $x\rightarrow \Tr[\C^u(x,x)]$ and $x\rightarrow \Tr[\C^\omega(x,x)]$ shown as heatmaps.

\section{Divergence free GPs/kernels and representer formulas}\label{secdivfreegp}
We will now describe the vector-valued GP $\xi\sim \cN(0,K)$ employed to close the NS equations and introduce representer formulas for identifying $u^\star$ and its partial derivatives as a function of $(q^\star,W^\star)$.
Recall (see \cite[Chap.~7,17]{owhadi2019operator} and \cite[Sec.~8.1]{owhadi2020ideas}) that $x\rightarrow \xi(x)$ is a map from $\T^d$ to a linear (Hilbert) space of $d$-dimensional centered Gaussian vectors such that
\begin{equation}
\Cov\big(\xi(x),\xi(y)\big)=K(x,y) \text{ for }x,y\in \T^d\,,
\end{equation}
where the covariance function
$K$ is a $\R^{d\times d}$ matrix-valued kernel (also known as a vector-valued kernel \cite{alvarez2012kernels}).
Write $\H_K$ for the reproducing kernel Hilbert space (RKHS) of $\R^d$ valued functions defined by $K$.
To ensure that  our approximation $u^\star$ remains zero-mean, incompressible and that  \eqref{eqkhgdkejgedejhge} and \eqref{eqodesys} are properly defined, we select $K$ so that $\H_K$ is contained in the set
\begin{equation}
\S^3(\T^d):=\big\{v\in C^3(\T^d)\mid \int_{\T^d}v(x)\,dx=0 \text{ and } \diiv v=0 \big\}
\end{equation}
of $\T^d$-periodic zero-mean divergence-free $\R^d$-valued functions with continuous third-order derivatives (we write $C^k$ for the space of continuously $k$th-order differentiable functions). Matrix-valued kernels inducing an RKHS containing divergence-free vector-valued functions can be
 constructed by starting with a stationary scalar-valued kernel $G(x,x')=g(x-x')$ and selecting
$K(x,y)= \big(\Hess g -\Tr[\Hess g] I_d\big)(x-y)$ where $\Hess$ is the Hessian operator and $I_d$ is the $d\times d$ identity matrix \cite[Sec~5.1]{alvarez2012kernels}.
Here we present a more general approach enabling using non-stationary kernels and the incorporation of nontrivial fluid-structure boundary conditions into the kernel (see Fig.~\ref{figobstacle}).
We will distinguish the $d=2$ and $d=3$ cases in our description of this approach.

\begin{figure}[h]
    \centering
        \includegraphics[width=\textwidth ]{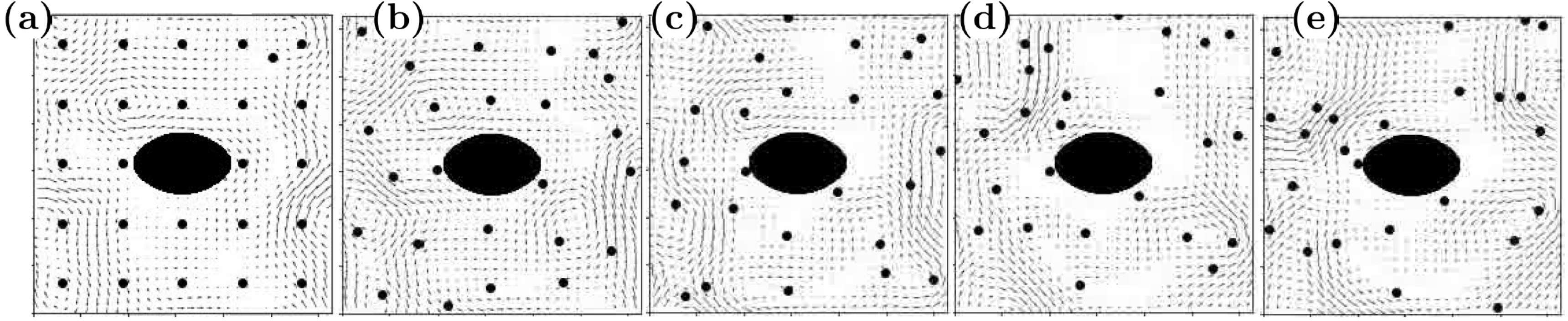}
    \caption{Flow around an obstacle at times $t=i*0.3$ for $i=0,1,\ldots,5$.}
    \label{figobstacle}
\end{figure}

\subsection{Two dimensional case ($d=2$)}\label{subdeq2ker}

\subsubsection{Divergence-free kernels}
Given an $\R^2$-valued  function $v(x)=(v_1(x),v_2(x))^T$, $\curl\, v=(-\partial_{x_2} v_1(x) +\partial_{x_1} v_2(x))$ can be written as the inner product between
the row vector $\curl_x=(-\partial_{x_2}, \partial_{x_1})$ and the column vector $v(x)=(v_1(x),v_2(x))^T$. Let $G$ be a non-degenerate $C^3$-differentiable scalar-valued kernel on $\T^2$ such that $\H_G$ (the RKHS defined by $G$) is compactly embedded in $H^s(\T^2)$ for some $s>5$.
Extending matrix-vector operations to differential operators, we define
\begin{equation}\label{eqkheddy}
K(x,y):=\curl_x^T \curl_y G(x,y)=\begin{pmatrix}-\partial_{x_2}\\ \partial_{x_1}\end{pmatrix}
\begin{pmatrix}-\partial_{y_2}& \partial_{y_1} \end{pmatrix}G(x,y)\,,
\end{equation}
which can also be written as,
\begin{equation}\label{eqK2d}
K(x,y):=\begin{pmatrix}\partial_{x_2}\partial_{y_2}&-\partial_{x_2}\partial_{y_1}\\
-\partial_{x_1}\partial_{y_2}& \partial_{x_1}\partial_{y_1}\ \end{pmatrix}G(x,y):=\begin{pmatrix}\partial_{x_2}\partial_{y_2}G(x,y)&-\partial_{x_2}\partial_{y_1}G(x,y)\\
-\partial_{x_1}\partial_{y_2}G(x,y)& \partial_{x_1}\partial_{y_1} G(x,y) \end{pmatrix}\,.
\end{equation}

The following proposition shows that $K$ is a valid non-degenerate kernel satisfying our requirements.
\begin{Proposition}\label{prop2d}
It holds true that  (1)
$K=$\eqref{eqK2d} is a non-degenerate kernel,  (2) Its RKHS $\H_K$ is compactly embedded in $\H^{s-1}(\T^2)$, and (3)  $\H_K\subset \S^3(\T^2)$.
\end{Proposition}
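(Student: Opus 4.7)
The strategy is to exhibit $K$ as the covariance of an explicit vector-valued GP obtained by differentiating the scalar GP $\eta\sim\cN(0,G)$, and then to read off each of the three claims from the mapping properties of $\curl^T$. Setting $\xi(x):=\curl_x^T\eta(x)=(-\partial_{x_2}\eta(x),\partial_{x_1}\eta(x))^T$, a direct computation gives $\Cov(\xi(x),\xi(y))=\curl_x^T\curl_y G(x,y)=K(x,y)$, so $K$ is automatically positive semi-definite; moreover, the standard push-forward identity for RKHSs under bounded linear operators identifies $\H_K$ with the image $\curl^T(\H_G)$ equipped with the quotient norm $\|v\|_{\H_K}=\inf\{\|\phi\|_{\H_G}:\curl^T\phi=v\}$.

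With this representation, claim (3) is essentially immediate: every $v=\curl^T\phi\in\H_K$ is divergence-free by equality of mixed partials (applicable because $\phi\in\H_G\subset H^s(\T^2)\hookrightarrow C^4(\T^2)$ for $s>5$), has zero mean by periodic integration by parts ($\int_{\T^2}\partial_i\phi\,dx=0$), and is $C^3$ since $v\in H^{s-1}(\T^2)\hookrightarrow C^3(\T^2)$ when $s-1>4$. Claim (2), compact embedding of $\H_K$ into $H^{s-1}(\T^2)$, then follows from the factorisation $\H_G\hookrightarrow H^s(\T^2)\xrightarrow{\curl^T}H^{s-1}(\T^2)$: the composition of a compact embedding with a bounded operator is compact, so $\curl^T$ sends the unit ball of $\H_G$ to a relatively compact set in $H^{s-1}$, and by the quotient-norm description this image is exactly the unit ball of $\H_K$.

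The main obstacle is claim (1), strict positivity of the block Gram matrix on finite point sets. Fix distinct $x_1,\ldots,x_n\in\T^2$ and $v_1,\ldots,v_n\in\R^2$. The covariance representation gives
\[\sum_{i,j}v_i^{T}K(x_i,x_j)v_j=\Var\Big(\sum_i L_i\eta\Big),\qquad L_i\phi:=v_{i,2}\,\partial_1\phi(x_i)-v_{i,1}\,\partial_2\phi(x_i),\]
so vanishing of the left-hand side is equivalent to the functional $\sum_i L_i$ being identically zero on $\H_G$. I would then argue that every $v_i$ must vanish by producing, for each index $i_0$, a test element $\phi_{i_0}\in\H_G$ whose gradient at $x_{i_0}$ takes any prescribed value while its gradient vanishes at every other $x_j$; evaluating $\sum_i L_i$ on $\phi_{i_0}$ then yields an arbitrary linear combination of $(v_{i_0,1},v_{i_0,2})$ forced to equal zero. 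The technical heart is therefore the linear independence in $\H_G$ of the $2n$ Riesz representers $\{\partial_{y_1}G(\cdot,y)|_{y=x_i},\,\partial_{y_2}G(\cdot,y)|_{y=x_i}\}_{i=1}^{n}$, which I expect to extract from the non-degeneracy of $G$ combined with the $H^s$-regularity ($s>5$) making first-derivative point evaluations continuous on $\H_G$ and sufficiently rich to separate gradients at distinct points.
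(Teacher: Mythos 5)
Your proposal is correct and is essentially the paper's own proof viewed from the primal rather than the dual side: the paper constructs $\H_K$ via the dual quadratic norm $\|\phi\|_K^{*}=\|\curl\,\phi\|_G^{*}$ while you construct it as the push-forward $\curl_x^T\H_G$ with the quotient norm, and your variance identity for claim (1) is exactly the paper's identity $\alpha^T K(q,q)\alpha=\big\|\sum_i \updelta_{q_i}\circ(-\alpha_{i,1}\partial_{x_2}+\alpha_{i,2}\partial_{x_1})\big\|_G^{*,2}$, with (2) and (3) likewise resting on the same Sobolev-embedding facts. The only step you leave open --- the linear independence of the first-derivative evaluation functionals at distinct points as elements of $\H_G^{*}$ --- is asserted at the same level of detail in the paper's proof, so your argument introduces no gap beyond what the paper itself accepts.
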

\begin{proof}
To show that $K$ is a valid kernel, we will employ the one-to-one map between kernels, symmetric positive definite linear operators
, and quadratic norms presented in
\cite[Chap.~11,17]{owhadi2019operator} (see also \cite[Sec.~2.1]{chen2021solving}).
Write  $\H_G$ and $\|\cdot\|_G$ for the RKHS space and the RKHS norm induced by $G$ and $\H_G^*$ and $\|\cdot\|_G^*$ for their duals with respect to the $L^2$ inner product which we write $[\cdot,\cdot]$ ($[\varphi,f]:=\int_{\T^2}\varphi(x)f(x)\,dx$ for $\varphi \in \H_G^*$ and $f\in \H_G$).
The operation $\varphi \rightarrow \int_{\T^2} G(x,y)\varphi(y)\,dy$ defines a linear bijection $\G$ mapping $\H_G^*$ to $\H_G$ that is symmetric ($[\varphi, \G \varphi']=[\varphi', \G\varphi]$), positive ($[\varphi, \G \varphi]\geq 0$) and definite ($[\varphi, \G \varphi]=0$ if and only if $\varphi=0$).
Writing $\updelta_x$ for a delta Dirac function supported at the point $x$, $\G$ defines the kernel $G$ via $G(x,y)=[\updelta_x, \G \updelta_y]$.
Furthermore
 $\|\varphi\|_G^{*,2}=\int_{(\T^2)^2} G(x,y)\varphi(x)\varphi(y)\,dx\,dy=[\varphi,\G \varphi]$ for $\varphi \in \H_G$ and $\|f\|_G=\sup_{\varphi \in \H_G^*}[\varphi,f]/\|\varphi\|_{G}^*$ for $f\in \H_G$. These identities show that there is a one-to-one correspondence between the (non-degenerate) kernel $G$, the symmetric positive definite linear bijection $\G$ and the quadratic norms $\|\cdot\|_G^*$ and $\|\cdot\|_G$ (any of these objects can be used to define a valid kernel \cite[Chap.~11,17]{owhadi2019operator}).
For $\phi \in \S^3(\T^2)$ write
\begin{equation}\label{eqklwhjldkedh}
\|\phi\|_K^{*}:=\|\curl\, \phi\|_G^*\,.
\end{equation}
 Since $\|\phi\|_K^{*}$ is a quadratic norm on $\S^3(\T^2)$ it defines \cite[Chap.~11,17]{owhadi2019operator} a non-degenerate kernel $K$ with RKHS space $\H_K$ and  norm
 $\|\cdot\|_K$ such that $\|\cdot\|_K^{*}$ is the dual of $\|\cdot\|_K$ with respect to the $L^2$ inner product and $\H_K^*$ is the closure of
  $\S^3(\T^2)$ with respect to $\|\cdot\|_K^{*}$ (note that the construction \eqref{eqkheddy} and the identity $\diiv \curl = {\bf 0}$ imply that the elements of $\H_K$ are divergence-free functions).
 For the sake of clarity, we will also present the following alternate proof of the non-degeneracy of $K$.
For  $q\in (\T^2)^N$ write $K(q,q)$ for the $N\times N$ block matrix with $2\times 2$ block entries $K(q_i,q_j)$. For $\alpha\in (\R^2)^N$, write
$\alpha^T K(q,q)\alpha:=\sum_{i,j=1}^N \alpha_i^T K(q_i,q_j)\alpha_j$. Then the identity $\alpha^T K(q,q)\alpha=\big\|\sum_{i=1}^N \updelta_{q_i}\circ (-\alpha_1\partial_{x_2}+\alpha_2 \partial_{x_1})\big\|_G^{*,2}$ implies that $K(q,q)$ is invertible if the $q_i$ are pairwise distinct and $\alpha\not=0$, i.e. $K$ is non-degenerate.
 (2) follows from the identity \eqref{eqklwhjldkedh} and $\|v\|_K=\sup_{\phi \in \H_K^*}[\phi,v]/\|\phi\|_K^*$. (3) follows from (a) the compact embedding of $\H_G$  into $H^s(\T^2)$ for some $s>5$, and (b) the compact embedding of $H^{s-1}(\T^2)$ into $C^3(\T^2)$ for $s>5$.
\end{proof}

\begin{Remark}
The results of this section naturally generalize to the situation where $G$ is a kernel on  $A/B$ where $B$ is an inclusion in the domain $A$.
In that case, required boundary conditions on the elements of $\H_K$ (e.g., stick or no-slip) transfer onto required boundary conditions on $G$.
Possible designs of $G$ include (1) identifying $G$ as the Green's function of a higher order elliptic PDE on $A/B$ with the required boundary conditions, (2) designing $G$ with transformations of an initial defined on $\R^d$. For Fig.~\ref{figobstacle}, $G(x,y)=\g(x,y) f(x)f(y)$ where $\g$ is a kernel on $\T^2$ and $f$ is a smooth function equal to zero on the inclusion $B$ and one on $\T^2/B_\epsilon$ where $B_\epsilon$ is an $\epsilon$ enlargement of $B$ obtained by adding a boundary layer of size $\epsilon$ (the resulting elements of $\H_K$ satisfy a stick boundary condition).
\end{Remark}

\subsubsection{Representer formulas}\label{subsecrep2d}
We will now introduce representer formulas for the conditional mean and covariance of the GP $\xi \sim \cN(0,K)$ given $\curl\, \xi(q)=W$.
Write $\Delta_x \Delta_y G$ for the  $\T^2$ valued kernel $\Delta_x \Delta_y G(x,y)$.
For $q\in (\T^2)^N$ write $\Delta_x \Delta_y G(q,q)$ for the $N\times N$ matrix with entries $\Delta_x \Delta_y G(q_i,q_j)$.
For $x\in \T^2$ and $q\in \T^N$ write $\curl_x^T \Delta_y G(x,q)$ for the $N$-vector with $\R^2$-valued entries
$\curl_x^T \Delta_y G(x,q_i)$.

\begin{Proposition}\label{propkefhirfh}
The GP $\xi \sim \cN(0,K)$ conditioned on $\curl\, \xi(q)=W$ is also Gaussian with conditional mean
\begin{equation}\label{eqjkhkwhb1}
u^\star(x,q,W)=\eqref{eqkhgdkejgedejhge}=\curl_x^T \Delta_y G(x,q) \big(\Delta_x \Delta_y G(q,q)\big)^{-1} W\,,
\end{equation}
and conditional covariance kernel $\C^u(x,y)=$\eqref{eqcovcu} given by
\begin{equation}\label{eqjkhkwhb2}
\curl_x^T \curl_y G(x,y)-\curl_x^T \Delta_y G(x,q) \big(\Delta_x \Delta_y G(q,q)\big)^{-1} \Delta_x \curl_y G(q,y) \,.
\end{equation}
Furthermore, the GP $\curl\,\xi $ conditioned on $\curl\, \xi(q)=W$ is also Gaussian with conditional mean
\begin{equation}\label{eqjkhkwhb1curl}
\curl u^\star(x,q,W)=\Delta_x \Delta_y G(x,q) \big(\Delta_x \Delta_y G(q,q)\big)^{-1} W\,,
\end{equation}
and conditional covariance kernel $\C^\omega(x,y)=$\eqref{eqcovcomega} given by
\begin{equation}\label{eqjkhkwhb2curl}
\Delta_x \Delta_y G(x,y)-\Delta_x \Delta_y G(x,q) \big(\Delta_x \Delta_y G(q,q)\big)^{-1} \Delta_x \Delta_y G(q,y) \,.
\end{equation}
\end{Proposition}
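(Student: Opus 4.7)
The plan is to apply the standard Gaussian conditioning formula to the jointly Gaussian pair $(Z, Y)$ where $Z$ is either $\xi(x)$ or $\curl\,\xi(x)$ and $Y = \curl\,\xi(q) \in \R^N$. For such a pair, the conditional distribution of $Z$ given $Y = W$ is Gaussian with mean $\Cov(Z,Y)\Cov(Y,Y)^{-1} W$ and covariance $\Cov(Z,Z) - \Cov(Z,Y)\Cov(Y,Y)^{-1}\Cov(Y,Z)$. The entire task thus reduces to computing three closed-form cross-covariances, verifying the invertibility of $\Cov(Y,Y)$, and combining them. For the last step I would just plug in: two substitutions (one with $Z=\xi(x)$, one with $Z=\curl\,\xi(x)$) produce all four representer formulas \eqref{eqjkhkwhb1}--\eqref{eqjkhkwhb2curl}.

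For the cross-covariances, I would exploit linearity of covariance in each slot together with the defining identity $K(x,y) = \curl_x^T\,\curl_y G(x,y)$. Writing $\curl = (-\partial_2, \partial_1)$ as a row operator, a one-line computation gives the scalar identity $\curl\,\curl^T = \Delta$. Applying $\curl_y^T$ against the second slot of $K$ then telescopes to
\begin{equation*}
\Cov\bigl(\xi(x), \curl\,\xi(y)\bigr) \;=\; K(x,y)\,\curl_y^T \;=\; \curl_x^T\,\curl_y\,\curl_y^T\, G(x,y) \;=\; \curl_x^T \Delta_y G(x,y),
\end{equation*}
and applying $\curl_x$ against the first slot as well yields
\begin{equation*}
\Cov\bigl(\curl\,\xi(x), \curl\,\xi(y)\bigr) \;=\; \Delta_x \Delta_y G(x,y).
\end{equation*}
Evaluating the second identity at $x=q$, $y=q$ gives $\Cov(Y,Y) = \Delta_x\Delta_y G(q,q)$.

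The two points that still require justification are (i) that differentiation commutes with covariance in the manipulations above, and (ii) that $\Delta_x\Delta_y G(q,q)$ is invertible. For (i), I would appeal to statement (2) of Proposition \ref{prop2d}: the compact embedding $\H_K \subset H^{s-1}(\T^2)$ with $s>5$ makes $\curl\,\xi$ an almost-surely $C^2$ random field, so the derivatives can be pulled through the expectation (equivalently, taken termwise in an orthonormal expansion in $\H_K$). For (ii), I would re-run the alternate non-degeneracy argument already presented in the proof of Proposition \ref{prop2d}: for $\alpha\in\R^N$,
\begin{equation*}
\alpha^T\,\Delta_x\Delta_y G(q,q)\,\alpha \;=\; \Bigl\|\sum_{i=1}^N \alpha_i\,\Delta \updelta_{q_i}\Bigr\|_G^{*,2},
\end{equation*}
and the non-degeneracy of $G$ together with the distinctness of the $q_i$ forces the right-hand side to vanish only when $\alpha=0$. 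The main obstacle I anticipate is therefore not analytic but bookkeeping: keeping straight which variable each differential operator acts on when $K$, $\curl K$, and $\Delta K$ are assembled in block form and threaded through the matrix-inverse formula, all of which is routine once the two covariance identities above are in hand.
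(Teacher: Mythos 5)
Your proposal is correct and follows essentially the same route as the paper: the paper simply invokes the generalized representer theorem (which packages your Gaussian conditioning formula and cross-covariance computations), uses the same identity $\curl\,\curl^T=\Delta$, derives the curl formulas via $\C^\omega=\curl_x\curl_y^T\C^u$ rather than a second substitution, and establishes invertibility of $\Delta_x\Delta_y G(q,q)$ by the identical dual-norm argument $\alpha^T\Delta_x\Delta_y G(q,q)\alpha=\|\sum_i\alpha_i\updelta_{q_i}\circ\Delta\|_G^{*,2}$. Your explicit treatment of the regularity needed to interchange differentiation and covariance is a point the paper leaves implicit.
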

\begin{proof}
\eqref{eqjkhkwhb1} and \eqref{eqjkhkwhb2} follow from the
generalized representer theorem  \cite[Cor.~17.12]{owhadi2019operator} (see also \cite[Prop.~2.1]{chen2021solving}) and the identity
 $\curl \curl^T =\Delta $.
For $\alpha \in \R^N$, $\alpha^T \Delta_x \Delta_y G(q,q)\alpha=\|\sum_{i=1}^N \alpha_i \updelta_{q_i}\circ \Delta \|_{G}^{*,2}$ implies that $\Delta_x \Delta_y G(q,q)$ is invertible if the $q_i$ are pairwise distinct.
 \eqref{eqjkhkwhb1curl} and the identities $\curl \curl^T =\Delta $ and
$\C^\omega(x,y)=\curl_x \curl_y^T\C^u(x,y)$  imply \eqref{eqjkhkwhb1curl} and \eqref{eqjkhkwhb2curl}.
\end{proof}
Using Prop.~\ref{propkefhirfh},
\eqref{eqodesys} reduces to
 \begin{equation}\label{eqodesys2d}
  \begin{cases}
&\dot{q}^\star_i=\curl_x^T \Delta_y G(q^\star_i,q^\star) \big(\Delta_x \Delta_y G(q^\star,q^\star)\big)^{-1} W^\star\\
&\dot{W}_i^\star=\nu \Delta_x^2 \Delta_y G(q^\star_i,q^\star) \big(\Delta_x \Delta_y G(q^\star,q^\star)\big)^{-1} W^\star +g(q_i^\star(t),t)\,.
\end{cases}
\end{equation}

\subsection{Three dimensional case ($d=3$)}\label{subdeq2kerd3}

\subsubsection{Divergence-free kernels}
Given an $\R^3$-valued  function $v(x)$, $\curl\, v$ can be written as the inner product between
the matrix
\begin{equation}
\curl_x=\begin{pmatrix} 0 & -\partial_{x_3} & \partial_{x_2} \\ \partial_{x_3} &0 & -\partial_{x_1}\\ -\partial_{x_2}& \partial_{x_1} &0 \end{pmatrix}
\end{equation}
and the column vector $v(x)=(v_1(x),v_2(x), v_3(x))^T$. Let $G$ be a non-degenerate $C^3$-differentiable scalar-valued kernel on $\T^3$ such that $\H_G$ (the RKHS defined by $G$) is compactly embedded in $H^s(\T^3)$ for some $s>5.5$.
Define
\[
K(x,y):=\curl_x^T \curl_y G(x,y)=\begin{pmatrix} 0 & \partial_{x_3} & -\partial_{x_2} \\ -\partial_{x_3} &0 & \partial_{x_1}\\ \partial_{x_2}& -\partial_{x_1} &0 \end{pmatrix}
\begin{pmatrix} 0 & -\partial_{y_3} & \partial_{y_2} \\ \partial_{y_3} &0 & -\partial_{y_1}\\ -\partial_{y_2}& \partial_{y_1} &0 \end{pmatrix}G(x,y)\,,
\]
which can also be written as,
\begin{equation}\label{eqK3d}
K(x,y):=\begin{pmatrix}\partial_{x_3}\partial_{y_3}+\partial_{x_2}\partial_{y_2}&-\partial_{x_2}\partial_{y_1}&-\partial_{x_3}\partial_{y_1}\\
-\partial_{x_1}\partial_{y_2}& \partial_{x_3}\partial_{y_3}+\partial_{x_1}\partial_{y_1} &  -\partial_{x_3}\partial_{y_2}\\
-\partial_{x_1}\partial_{y_3}& -\partial_{x_2}\partial_{y_3}& \partial_{x_1}\partial_{y_2}+\partial_{x_2}\partial_{y_2}
\end{pmatrix}G(x,y)\,.
\end{equation}

\begin{Proposition}\label{prop3d}
It holds true that (1) $K=$\eqref{eqK3d} is a non-degenerate kernel,  (2) Its RKHS $\H_K$ is compactly embedded in $\H^{s-1}(\T^3)$, and (3)  $\H_K\subset \S^3(\T^3)$.
\end{Proposition}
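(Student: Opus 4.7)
The plan is to mirror the proof of Proposition \ref{prop2d} step by step, replacing the row-operator $\curl_x=(-\partial_{x_2},\partial_{x_1})$ of the 2D case with the $3\times 3$ antisymmetric matrix differential operator $\curl_x$ of the 3D case, and tracking the extra components.

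For $\phi\in\S^3(\T^3)$ I would define the candidate dual norm
$$
\|\phi\|_K^{*,2}:=\sum_{k=1}^{3}\|(\curl\,\phi)_k\|_G^{*,2},
$$
which is well defined by the compact embedding hypothesis on $G$ and the regularity of $\phi$. This is a quadratic norm on $\S^3(\T^3)$, so the kernel/operator/quadratic-norm correspondence invoked in the proof of Proposition \ref{prop2d} produces a unique non-degenerate kernel $K$ whose RKHS $\H_K$ has dual norm $\|\cdot\|_K^*$, with $\H_K^*$ equal to the closure of $\S^3(\T^3)$ under $\|\cdot\|_K^*$. Matching this abstract $K$ with \eqref{eqK3d} follows from computing the covariance of the $\R^3$-valued GP $\curl^T\xi$ obtained by applying the column-of-operators $\curl^T$ to a scalar GP $\xi\sim\cN(0,G)$, giving $\Cov(\curl^T\xi(x),\curl^T\xi(y))=\curl_x^T\curl_y G(x,y)$; the identity $\diiv\curl^T={\bf 0}$ then ensures divergence-freeness of every realization, and hence $\H_K\subset\S^3(\T^3)$ modulo regularity.

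For the direct non-degeneracy of $K(q,q)$, for distinct $q_1,\dots,q_N\in\T^3$ and $\alpha=(\alpha_1,\dots,\alpha_N)\in(\R^3)^N$, I would establish the identity
$$
\alpha^T K(q,q)\alpha=\sum_{k=1}^{3}\Big\|\sum_{i=1}^{N}\updelta_{q_i}\circ(\curl_x\,\alpha_i)_k\Big\|_G^{*,2},
$$
where $(\curl_x\,\alpha_i)_k=\sum_j(\curl_x)_{kj}\,\alpha_{i,j}$ is a scalar constant-coefficient first-order differential operator. Vanishing of the right-hand side forces each inner sum to vanish as a distribution, which in turn forces $(\curl_x\,\alpha_i)_k=0$ as an operator for every $i,k$ (by pairwise distinctness of the $q_i$). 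Reading off the three rows of the antisymmetric matrix $\curl_x$ and inspecting coefficients of $\partial_{x_1},\partial_{x_2},\partial_{x_3}$, the conditions for $k=1,2,3$ form a linear system in the entries of $\alpha_i$ whose only solution is $\alpha_i=0$, giving invertibility of $K(q,q)$.

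Parts (2) and (3) then follow exactly as in the 2D proof: the dual-norm relation combined with the compact embedding $\H_G\hookrightarrow H^s(\T^3)$ loses one derivative through each application of $\curl$ and yields the compact embedding $\H_K\hookrightarrow H^{s-1}(\T^3)$, and the Sobolev embedding $H^{s-1}(\T^3)\hookrightarrow C^3(\T^3)$, valid precisely when $s-1>3+3/2$, i.e.\ $s>5.5$, upgrades this to the stated inclusion. The only genuinely new step beyond the 2D argument is the injectivity check for the matrix-of-operators $\curl_x$ on constant $\R^3$-vectors, and this is an elementary computation; the kernel/norm correspondence, divergence-free structure, dimension-loss estimate, and Sobolev threshold all transfer verbatim, with the sharper threshold $s>5.5$ being the sole numerical change reflecting the higher spatial dimension.
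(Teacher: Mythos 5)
Your proof is correct and follows the paper's own route: the paper likewise defines $\|\phi\|_{K}^{*,2}$ for $\phi\in\S^3(\T^3)$ as the sum of the $G$-dual norms of the three components of $\curl\,\phi$ (obtained by integrating by parts) and then declares the remainder identical to the two-dimensional argument, which is exactly the transfer you carry out, including the correct Sobolev threshold $s>5.5$. The only quibble is your description of $K$ as the covariance of $\curl^T$ applied to a \emph{scalar} GP --- in three dimensions the $3\times 3$ matrix of operators $\curl^T$ must act on a vector potential (three independent copies of the scalar GP with kernel $G$) rather than on a scalar field --- but this aside plays no role in the argument, and your explicit injectivity check for $\curl_x$ on constant vectors is a detail the paper leaves implicit.
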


\begin{proof}
Write $\Hess$ for the Hessian operator and $I_3$ for the $3\times 3$ identity matrix.
Integrating by parts,
observe that for $\phi \in \S^3(\T^3)$,
\begin{equation}
\|\phi\|_{K}^{*,2}=\|-\partial_{x_3} \phi_2+\partial_{x_2}\phi_1\|_G^{*,2}+\|\partial_{x_3} \phi_1-\partial_{x_1}\phi_3\|_G^{*,2}
+\|-\partial_{x_2} \phi_1+\partial_{x_1}\phi_2\|_G^{*,2}\,.
\end{equation}
The remaining part of the proof is identical to that of Prop.~\ref{prop2d}.
\end{proof}

\subsubsection{Representer formulas}
We will now present representer formulas for the conditional mean and covariance of the GP $\xi \sim \cN(0,K)$ given $\curl\, \xi(q)=W$.
 Write $\L_x:=I_3 \Delta_x-\Hess_x$ and $\L_x \L_y G$ for the $3\times 3$ matrix valued kernel obtained by letting $\L_x$ act on the $x$ variable and $\L_y$ act on the $y$ variable of $G(x,y)$.
Similarly write $\curl_x^T \L_y G$ for the $3\times 3$ matrix valued function of $x,y$ obtained by letting $\curl_x^T$ act on the $x$ variable and
$\L_y$ on the $y$ variable of $G(x,y)$. Using the shorthand notations of Sec.~\ref{subsecrep2d}, for $q\in (\T^3)^N$, we write $\L_x \L_y G(q,q)$ for the $N\times N$ block matrix whose entries are the $3\times 3$ matrices $\L_x \L_y G(q_i,q_j)$. Similarly we write  $\curl_x^T \L_y G(x,q)$ for the
$N$-block vector whose entries are the $3\times 3$ matrices $\curl_x^T \L_y G(x,q_i)$.

\begin{Proposition}\label{propedkjhedbdy}
The GP $\xi \sim \cN(0,K)$ conditioned on $\curl\, \xi(q)=W$ is also Gaussian with conditional mean
\begin{equation}\label{eqjkhkwhb1b}
u^\star(x,q,W)=\eqref{eqkhgdkejgedejhge}=\curl_x^T \L_y G(x,q) \big(\L_x \L_y G(q,q)\big)^{-1} W\,,
\end{equation}
and conditional covariance kernel $\C^u(x,y)=$\eqref{eqcovcu} given by
\begin{equation}\label{eqjkhkwhb2b}
\curl_x^T \curl_y G(x,y)-\curl_x^T \L_y G(x,q) \big(\L_x \L_y G(q,q)\big)^{-1} \L_x \curl_y G(q,y) \,.
\end{equation}
Furthermore, the GP $\curl\,\xi $ conditioned on $\curl\, \xi(q)=W$ is also Gaussian with conditional mean
\begin{equation}\label{eqjkhkwhb1bcurl}
\curl u^\star(x,q,W)= \L_x\L_y G(x,q) \big(\L_x \L_y G(q,q)\big)^{-1} W\,,
\end{equation}
and conditional covariance kernel $\C^\omega(x,y)=$\eqref{eqcovcomega} given by
\begin{equation}\label{eqjkhkwhb2bcurl}
\L_x \L_y G(x,y)-\L_x \L_y G(x,q) \big(\L_x \L_y G(q,q)\big)^{-1} \L_x \L_y G(q,y)  \,.
\end{equation}
\end{Proposition}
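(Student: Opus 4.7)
The plan is to follow the proof of Proposition~\ref{propkefhirfh} step by step, replacing the 2D identity $\curl\,\curl^T=\Delta$ by its 3D matrix counterpart
\[
\curl_x\,\curl_x^T=\L_x=I_3\Delta_x-\Hess_x.
\]
This key identity can be verified by direct matrix multiplication: since the matrix $\curl_x$ is antisymmetric as a $3\times 3$ array of first-order differential operators, $\curl_x\curl_x^T=-\curl_x^2$, and expanding the product componentwise shows $\curl_x^2=\Hess_x-I_3\Delta_x=-\L_x$. Equivalently, it is the classical vector identity $\curl\,\curl\,v=\nabla\diiv v-\Delta v$ written as an operator equation, valid on all smooth vector fields (not only divergence-free ones).

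With this identity in hand, I would invoke the generalized representer theorem \cite[Cor.~17.12]{owhadi2019operator} exactly as in Proposition~\ref{propkefhirfh}. For $\xi\sim\cN(0,K)$ with $K=\curl_x^T\curl_y G$ and observations $\curl\,\xi(q)=W$, pulling the differential operators through the covariance yields
\[
\Cov\big(\xi(x),\curl\,\xi(q)\big)=\curl_x^T\L_y G(x,q),\qquad \Cov\big(\curl\,\xi(q),\curl\,\xi(q)\big)=\L_x\L_y G(q,q),
\]
and substituting into the standard Gaussian regression formula for the conditional mean and covariance gives \eqref{eqjkhkwhb1b} and \eqref{eqjkhkwhb2b} immediately. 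Applying $\curl_x$ on the left of \eqref{eqjkhkwhb1b}, and $\curl_x$ on the left and $\curl_y^T$ on the right of \eqref{eqjkhkwhb2b}, and invoking $\curl\,\curl^T=\L$ one last time, produces the vorticity versions \eqref{eqjkhkwhb1bcurl} and \eqref{eqjkhkwhb2bcurl}.

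Invertibility of $\L_x\L_y G(q,q)$ is established by mimicking the closing argument of Proposition~\ref{propkefhirfh}: for $\alpha\in(\R^3)^N$, the quadratic form $\alpha^T\L_x\L_y G(q,q)\alpha$ equals the variance of the scalar Gaussian $\sum_i\alpha_i\cdot\curl\,\xi(q_i)$, which can in turn be rewritten as a dual $G$-norm of the form $\|\sum_i\updelta_{q_i}\circ(\alpha_i\cdot\curl)\|_G^{*,2}$ using the same operator bookkeeping as in Proposition~\ref{prop3d}. Non-degeneracy of $G$, pairwise distinctness of the $q_i$, and the regularity $s>5.5$ then force $\alpha=0$, so the block matrix is positive definite.

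The hard part here is really just the matrix/operator bookkeeping: one has to keep track of when $\curl$ appears as a row versus a column, confirm that the transposes in $\curl_x^T\L_y G$ on one side match the $\L_x\curl_y G$ on the other, and verify that each $3\times 3$ tensor contraction lands in the right slot of the $N\times N$ block matrix. This is purely mechanical but easy to get wrong. Beyond this, there is no new analytic content: once the identity $\curl\,\curl^T=\L$ is checked, the proof is a line-by-line transcription of the 2D case with $\Delta$ replaced by $\L$.
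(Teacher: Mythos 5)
Your proposal is correct and follows essentially the same route as the paper: the generalized representer theorem combined with the operator identity $\curl\,\curl^T=I_3\Delta-\Hess=\L$, the relation $\C^\omega(x,y)=\curl_x\curl_y^T\C^u(x,y)$ for the vorticity versions, and invertibility of $\L_x\L_yG(q,q)$ by writing the quadratic form as a sum of dual $G$-norms of functionals supported at the distinct points $q_i$. The one slip is in the bookkeeping you yourself flag as error-prone: since the dual norm is taken with respect to the \emph{scalar} kernel $G$, the relevant functionals are the second-order ones $\sum_{i,k}\updelta_{q_i}\circ\big[\alpha_{i,k}(\delta_{k,m}\Delta_x-\partial_{x_k}\partial_{x_m})\big]$ (one for each $m=1,2,3$, exactly as in the paper), not the first-order $\sum_i\updelta_{q_i}\circ(\alpha_i\cdot\curl)$ you wrote, though this does not affect the validity of the argument.
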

\begin{proof}
\eqref{eqjkhkwhb1b} and \eqref{eqjkhkwhb2b} follow from the
generalized representer theorem  \cite[Cor.~17.12]{owhadi2019operator} (see also \cite[Prop.~2.1]{chen2021solving}) and the identity
 $\curl\, \curl^T =I_3 \Delta -\Hess $. Write $\delta_{k,m}$ for the Kronecker delta ($=1$ for $k=m$ and $=0$ otherwise).
For $\alpha \in (\R^3)^N$, the identity  \[\alpha^T \L_x \L_y G(q,q)\alpha=\sum_{m=1}^3 \Big\|\sum_{i=1}^N \sum_{k=1}^3  \updelta_{q_i}\circ \big[\alpha_{i,k}(\delta_{k,m}\Delta_x-\partial_{x_k}\partial_{x_m})\big]\Big\|_{G}^{*,2}\,,\] implies that $\L_x\L_y G(q,q)$ is invertible if the $q_i$ are pairwise distinct.
\eqref{eqjkhkwhb1b} and the identities $\curl \curl^T =\L $ and $\C^\omega(x,y)=\curl_x \curl_y^T\C^u(x,y)$ imply \eqref{eqjkhkwhb1bcurl} and
\eqref{eqjkhkwhb2bcurl}.
\end{proof}

Using Prop.~\ref{propedkjhedbdy},
\eqref{eqodesys} reduces to
 \begin{equation}\label{eqodesys2dt}
  \begin{cases}
\dot{q}^\star_i=&\curl_x^T \L_y G(q_i^\star,q^\star) \big(\L_x \L_y G(q^\star,q^\star)\big)^{-1} W^\star\\
\dot{W}_i^\star=&\nu \Delta_x \L_x \L_y G(q_i^\star,q^\star) \big(\L_x \L_y G(q^\star,q^\star)\big)^{-1} W^\star \\&+ W^\star_i
\nabla_x \curl_x^T \L_y G(q_i^\star,q^\star) \big(\L_x \L_y G(q^\star,q^\star)\big)^{-1} W^\star+g(q_i^\star(t),t)\,.
\end{cases}
\end{equation}

\subsection{Periodic kernels}\label{secperker}
We will now describe the construction of the kernel $G$, which must be a non-generate $C^3$-differentiable scalar-valued kernel on $\T^d$ such that
 $\H_G$ is compactly embedded in $H^s(\T^d)$ for some $s>4+d/2$.
One approach to designing $G$ is to compose a (sufficiently regular and non-degenerate) kernel $\g$ on $\R^{2d}\times \R^{2d}$ with the function
$h\,:\, \T^d \rightarrow \R^{2d}$ defined by
\begin{equation}
h(x)=\big(\cos(x_1), \sin(x_1),\ldots, \cos(x_d),\sin(x_d)\big)\,,
\end{equation}
and obtain
\begin{equation}
G(x,y)=\g\big(h(x),h(y)\big)\,.
\end{equation}
Taking $\g$ to be the Gaussian kernel
 $\g(X,Y)=\exp(-\frac{|X-Y|^2}{2\sigma^2})$ leads to
\begin{equation}\label{eqljkekjdns}
G(x,y)=\exp\Big(\frac{-d+\sum_{i=1}^d \cos(x_i-y_i) }{\sigma^2}\Big)
\end{equation}
which satisfies the requirements on $G$.

\begin{Remark}
Assume $\g$ to be analytic. It follows \cite{sun2008reproducing} that the elements of its RKHS $\H_\g$ are analytic functions.
Therefore, for every function of the form $f\circ h$ with $f\in \H_\g$ and $\g$ analytic, $f$ is uniquely determined by its values on the range of $h$.
Write $\<\cdot,\cdot\>_\g$ ($\|\cdot\|_\g$) for the RKHS inner product (norm) defined by $\g$.
For $f\in \H_\g$ we can therefore define the norm $\|f\circ h\|:=\|f\|_\g^2$ and write $\<\cdot,\cdot\>$ for its associated inner product.
The  reproducing property
\begin{equation}\label{eqkejhhdekh}
\<f\circ h,\g(h(\cdot),h(x))\>=\<f,\g(\cdot,h(x))\>_\g=f\circ h(x)
\end{equation}
   for  $f\in \H_\g$, implies that $\H_G=\{f\circ h\mid f\in \H_\g\}$ and the RKHS norm $\|\cdot\|_G$ defined by $G$ is $\|\cdot\|$, i.e.,
\begin{equation}
\|f\circ h\|_G^2=\|f\|_\g^2\text{ for }f\in \H_\g\,.
\end{equation}
If $\g$ is not analytic then these results generalize to $\H_G=\{f\circ h\mid f\in \H_\g\}$ with
\begin{equation}\label{eqedlekddlkj}
\|v\|_G^2=\inf_{f\in\H_\g\,:\, f\circ h=v}\|f\|_\g^2\,.
\end{equation}
To show this, observe that since $\{f\in\H_\g\,:\, f\circ h=v\}$ is a closed affine subspace of $\H_\g$, the infimum in \eqref{eqedlekddlkj} is achieved and can be expressed as $P v$ where $P$ is a linear operator. Therefore
 $\|v\|^2=\|P v\|_\g^2$ and $\<v,v'\>=\<P v,P v'\>_\g$
define a quadratic norm and an inner product on  $\{f\circ h\mid f\in \H_\g\}$ satisfying the reproducing identity
\begin{equation}\label{eqiuegdueyd3}
\<v,\g(h(\cdot),h(x))\>=\<P v,P \g(h(\cdot),h(x))\>_\g=\<P v,\g(\cdot,h(x))\>_\g=(Pv)\circ h(x)=v(x)\,,
\end{equation}
 which establishes \eqref{eqedlekddlkj}.
The identity  $P \g(h(\cdot),h(x))=\g(\cdot,h(x))$ employed in \eqref{eqiuegdueyd3} follows by observing that the identity
$\|\g(\cdot,h(x))+f\|_\g^2=\g(h(x),h(x))+\|f\|_\g^2+2 f\circ h(x)$ implies that the minimizer of
$\|\g(\cdot,h(x))+f\|_\g^2$ over $f\in\H_\g$ such that $f\circ h=0$ is $f=0$.
\end{Remark}

\section{Power-laws informed kernels}\label{secpowerlaws}
We will now investigate the incorporation of known scaling/power laws  into the selection of the kernel $G$ introduced in
 Sec.~\ref{secdivfreegp} to derive the divergence-free kernel $K$.
 We will focus on the {\bf two-third law}  derived by Kolmogorov \cite{kolmogorov1941local}  from symmetry and universality assumptions on fully developed (homogeneous and isotropic) turbulence.

In {\bf dimension ${\mathbf d=3}$}, the {\bf two-third law} of fully developed (homogeneous and isotropic) turbulence states that the mean of the velocity increment
$|u(x+y,t)-u(x,t)|^2$  behaves approximately as $|y|^\frac{2}{3}$, the two-thirds power of the distance $|y|$ between the points $x+y$ and $x$ \cite[Chap.~5]{frisch1995turbulence}, which ``is equivalent to the statement that the energy spectrum follows a $k^{-\frac{5}{3}}$ law over a suitable range'' \cite[p.~61]{frisch1995turbulence}.

In {\bf dimension ${\mathbf d=2}$}, the statistics of the velocity increments follow a different power-law \cite{lindborg1999can, boffetta2012two}:
the mean of the squared velocity increment $|u(x+y,t)-u(x,t)|^2$
 behaves approximately as $|y|^2$,
 which is equivalent to the statement that the energy spectrum follows a $k^{-3}$ law  \cite[p.~56]{frisch1995turbulence}.

To incorporate these power laws, observe that, in the proposed GP approach, the velocity $u$ is randomized according to the distribution of $\xi\sim \cN(0,K)$. We will therefore use the identity
\begin{equation}\label{eqlwelejdkledje}
\E\big[|\xi(x)-\xi(y)|^2\big]=\Tr\big[K(x,x)+K(y,y)-2 K(x,y)\big]
\end{equation}
to incorporate the velocity increments power laws discussed above.
Considering the situation where $G(x,y)$ is, as in \eqref{eqljkekjdns}, stationary, i.e., $G(x,y)=\psi(x-y)$ for some function $\psi$,
\eqref{eqK2d} and \eqref{eqK3d} reduce to the particular construction of \cite[Sec~5.1]{alvarez2012kernels}, i.e.,
\begin{equation}\label{eqkjhegdedg}
K(x,y)= \big(\Hess \psi -\Tr[\Hess \psi] I_d\big)(x-y)\,.
\end{equation}
\eqref{eqlwelejdkledje} then reduces to
\begin{equation}\label{eqjhgfffrw}
\E\big[|\xi(x)-\xi(y)|^2\big]=2(d-1)\big( \Delta \psi(x-y)-\Delta \psi(0)\big)\,.
\end{equation}

\subsection{The Richardson cascade}\label{secsecrcas}
The basic phenomenology of turbulence, known as the   Richardson cascade \cite[Chap.~7]{frisch1995turbulence}, is that the velocity field is composed of space-filling eddies of various sizes $ \ell_0 r^n$ for some $0<r<1$ and $n=0,1,\ldots,m$. This phenomenology is associated with the concept of energy cascade, representing the idea that the energy is transferred from large (inertial) scales of motion to the small (dissipative) scales. Two-dimensional turbulence is also associated with the possible presence of an inverse energy cascade \cite{sommeria1986experimental} representing the transfer of energy from the small scales to the large scales.
The dissipation scale $\ell_c\sim \ell_0 r^{m}$ is identified by  matching  the  convective transport time scale ($\ell/\delta v(\ell)$) with the diffusive transport time scale ($\ell^2/\nu$). For $d=3$ (using $\delta v(\ell)\sim U_0 (\ell/\ell_0)^\frac{1}{3}$), this translates into $\ell_c/\ell_0\sim R^{-\frac{3}{4}}  $ where $R= U_0 \ell_0 /\nu$ is the Reynolds number. For $d=2$ (using $\delta v(\ell)\sim U_0 (\ell/\ell_0)$), this translates into $\ell_c/\ell_0\sim R^{-\frac{1}{2}}$.
We incorporate these  concepts from a statistical perspective by representing the GP $\xi\sim \cN(0,K)$ as the additive GP
\begin{equation}\label{eqkjedkdjd}
\xi=\sum_{n=0}^m \xi^{(n)}\,,
\end{equation}
where the $\xi^{(n)}\sim \cN(0,K^{(n)})$ are independent and represent eddies at the scale indexed by $n$.
Representing $\xi$ as an additive GP is equivalent to representing $K$ as the additive kernel
\begin{equation}
K=\sum_{n=0}^m K^{(n)}\,.
\end{equation}
To ensure that $K$ and the $K^{(n)}$ are  divergence-free matrix valued kernels we select, as in Sec.~\ref{secdivfreegp},
$K^{(n)}(x,y):=\curl_x^T \curl_y G^{(n)}(x,y)$ where $G^{(n)}$ is a  periodic scalar-valued kernel on $\T^d$.
This is equivalent to selecting  $K(x,y):=\curl_x^T \curl_y G(x,y)$ with
\begin{equation}
G=\sum_{n=0}^m G^{(n)}\,.
\end{equation}

\subsection{Power laws}\label{secpwlaws}
We will now incorporate the velocity-increments power laws into the selection of the kernels $G^{(n)}$.
To incorporate periodicity, stationarity, power laws, and self-similarity we select (as in \eqref{eqljkekjdns})
\begin{equation}\label{eqljkekjdns2}
G^{(n)}(x,y)=\alpha_n \exp\Big(\frac{-d+\sum_{i=1}^d \cos(x_i-y_i) }{\sigma_n^2}\Big)\,,
\end{equation}
with
\begin{equation}\label{eqjhedhhded}
\sigma_n= \frac{\sigma_0}{2^n}\text{ and }\alpha_n=\sigma_n^\gamma \,,
\end{equation}
 for some $\gamma\in \R$  to be determined by the power-law discussed in Sec.~\ref{secpowerlaws}.
For
\begin{equation}
\psi^{(n)}(x):=\alpha_n \exp\Big(\frac{-d+\sum_{i=1}^d \cos(x_i) }{\sigma_n^2}\Big)\,,
\end{equation}
we have
\begin{equation}
\Delta \psi^{(n)}(x):=\sum_{j=1}^d \alpha_n \Big(\frac{\sin^2(x_j) }{\sigma_n^4}-\frac{\cos(x_j) }{\sigma_n^2}\Big)  \exp\Big(\frac{-d+\sum_{i=1}^d \cos(x_i) }{\sigma_n^2}\Big)\,,
\end{equation}
which by \eqref{eqjhgfffrw} leads to
\begin{equation}\label{eqjhgfffrw2}
\begin{split}
\E\big[|\xi(x)-\xi(0)|^2\big]=2(d-1)\sum_{n=0}^m \alpha_n &\sum_{j=1}^d\Bigg(\frac{1 }{\sigma_n^2} +
\Big(\frac{\sin^2(x_j) }{\sigma_n^4}-\frac{\cos(x_j) }{\sigma_n^2}\Big)\\&  \exp\Big(\frac{-d+\sum_{i=1}^d \cos(x_i) }{\sigma_n^2}\Big)
\Bigg)\,.
\end{split}
\end{equation}
We deduce that for  $x\sim 2^{-q}$ with $1<q<m$,
\begin{equation}\label{eqjhgfffrw3}
\begin{split}
\E\big[|\xi(x)-\xi(0)|^2\big]\sim 2(d-1)\sum_{n=q}^m \alpha_n \frac{d }{ \sigma_n^2}
\,.
\end{split}
\end{equation}
Observing that $\sigma_n \sim 2^{-n}$ and $\alpha_n \sim 2^{-n \gamma}$, it follows that for $ |x|\sim 2^{-q}$ and $\gamma>2$,
\begin{equation}\label{eqjhgfffrw4}
\begin{split}
\E\big[|\xi(x)-\xi(0)|^2\big]\sim  2^{q (2-\gamma)}\,.
\end{split}
\end{equation}
Therefore the velocity increment power laws of Sec.~\ref{secpowerlaws} can be incorporated by taking
\begin{equation}
\begin{cases}
\gamma=4 &\text{ for } d=2\\
\gamma=\frac{2}{3}+2&\text{ for } d=3\,.
\end{cases}
\end{equation}

\subsection{Mode decomposition}\label{secma}
Although the Richardson cascade is based on a qualitative analysis of turbulence supported by a qualitative notion of eddies at different scales, this analysis can be made quantitative through kernel mode decomposition \cite{owhadi2019kernelmd}. To describe this, observe that the
 decomposition \eqref{eqkjedkdjd} leads to a corresponding decomposition of the velocity field \eqref{eqkhgdkejgedejhge}, i.e.,
\begin{equation}\label{eqkhgdkejgedejhgeb2}
u^\star\big(x,q,W\big)=\sum_{n=0}^m u^{(n)}\big(x,q,W\big)
\end{equation}
with
\begin{equation}\label{eqkhgdkejgedejhgeb3}
u^{(n)}\big(x,q,W\big)=\E\big[\xi^{(n)}(x)\big| \curl\, \xi(q)=W\big]\,,
\end{equation}
where $u^{(n)}$ the following representer formulas (using the notations of \eqref{eqjkhkwhb1} and \eqref{eqjkhkwhb1b})
\begin{equation}\label{eqjkhkwhb1dd}
u^{(n)}(x,q,W)=
\begin{cases}
\curl_x^T \Delta_y G^{(n)}(x,q) \big(\Delta_x \Delta_y G(q,q)\big)^{-1} W\quad &{\mathbf d=2}\,,\\
\curl_x^T \L_y G^{(n)}(x,q) \big(\L_x \L_y G(q,q)\big)^{-1} W \quad &{\mathbf d=3}\,.
\end{cases}
\end{equation}
Furthermore the RKHS norm of $u^\star$ admits the decomposition  \cite[Sec.~4.1]{owhadi2019kernelmd}
\begin{equation}\label{eqkhgdkejgedejhgeb2x}
\|u^\star\|_K^2=\sum_{n=0}^m \|u^{(n)}\|_{K^{(n)}}^2
\end{equation}
where
\begin{equation}
\|u^{(n)}\|_{K^{(n)}}^2=\<u^{(n)},u^*\>_K=\Var\big[\<\xi^{(n)},u^*\>_K\big]\,,
\end{equation}
can be interpreted as a measure of the activation of the GP (mode) $\xi^{(n)}$ after conditioning on $\xi(q)=W$.
Using $K^{(n)}(x,y):=\curl_x^T \curl_y G^{(n)}(x,y)$, we obtain
\begin{equation}\label{eqjkhkwhb1ddrn}
\big\|u^{(n)}(\cdot,q,W)\big\|_{K^{(n)}}^2=
\begin{cases}
W^T\big(\Delta_x \Delta_y G(q,q)\big)^{-1} W\quad &{\mathbf d=2}\,,\\
W^T \big(\L_x \L_y G(q,q)\big)^{-1} W \quad &{\mathbf d=3}\,.
\end{cases}
\end{equation}

\begin{figure}[h]
    \centering
        \includegraphics[width=\textwidth ]{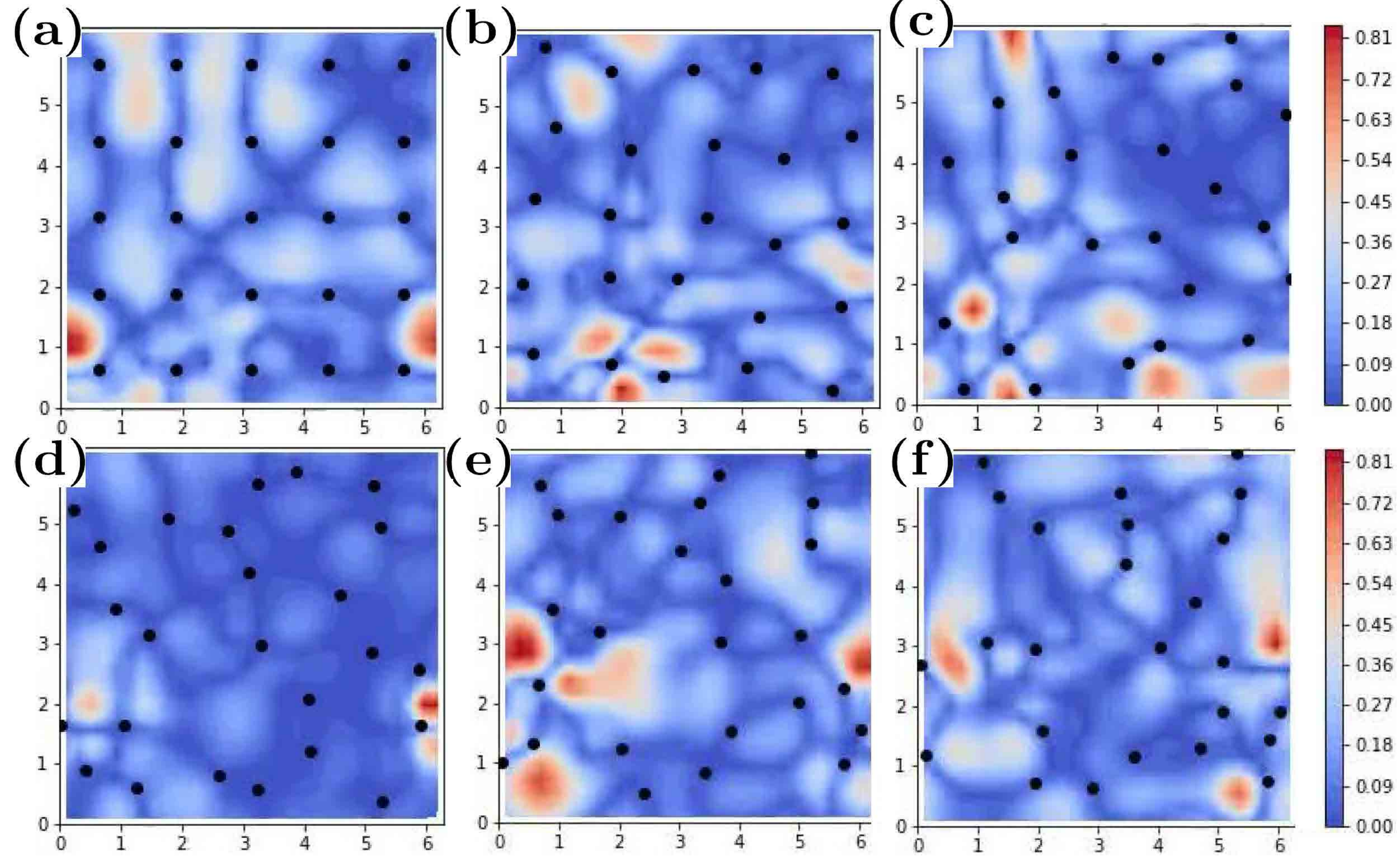}
    \caption{Source terms error $\s(\cdot,t)$ at times $t=i*0.3$ for $i=0,1,\ldots,5$.}
    \label{fiex6}
\end{figure}

\section{Accuracy of the proposed approach and information loss}\label{secaccap}

\subsection{The residual (source term error) as a measure of accuracy}
The accuracy of the proposed approach can be characterized by two terms. The first one is the error $\bar{\omega}(x,0)-\omega(x,0)$ in approximating the initial value of the vorticity.
The second term is the spurious source term $\s$ introduced by the numerical method, defined as (see Fig.~\ref{fiex6} for snapshots\footnote{We are using periodic boundary conditions, so the errors in these snapshots are solely a solely a reflection of the particle locations and the initial condition.} of $|\s(\cdot,t)|$).
 \begin{align}
{\bf (d=2)}\quad &\s(x,t):=\partial_t \bar{\omega} + \bar{u}\nabla \bar{\omega}-\nu \Delta \bar{\omega} - g(x,t)\,,\label{eqvor2ds}\\
{\bf (d=3)}\quad &\s(x,t):=\partial_t \bar{\omega} + \bar{u}\nabla \bar{\omega}-\nu \Delta \bar{\omega}-\bar{\omega} \nabla \bar{u} - g(x,t)\label{eqvor3ds}\,,
\end{align}

The first term $\bar{\omega}(x,0)-\omega(x,0)$ is well-understood as a kernel interpolation error, and a-priori error estimates can be obtained from
 Poincar\'{e} inequalities \cite{Wendland:2005, owhadi2019operator}: the norm of this term can be shown to decay towards zero as a power of the fill distance between collocation points $q_i(0)$ (the power depends on the strength of the norm, the regularity of $\omega_0$, and the regularity of the RKHS defined by the GP $\curl\,\xi(x)$, see \cite{Wendland:2005,owhadi2019operator} for details and further references).

The second term $\s(x,t)$ is not well-understood, and we will seek to analyze it. Note that this term (1) is zero at the particle locations $q_i(t)$ ($\s(q_i(t),t)=0$), (2) a function of the  choice of kernel for $\xi$ and the number of particles $N$.
Although stability estimates\footnote{Stability estimates are available for $d=2$ \cite{zadrzynska2015stability}, they remain a challenge
for $d=3$ \cite{ladyzhenskaya2003sixth}.} for NS equations would allow us to bound the norm of the errors on velocity $u-\bar{u}$ and vorticity
$\omega-\bar{\omega}$, we do not expect those bounds to be useful since the chaotic nature of the NS equations would imply their rapid blow-up as a function of time (caused by a blow-up of the stability constants) in turbulent regimes.
On the other hand,  $\s(x,t)$ is a more useful measure of error since it acts as an instantaneous error made on the source term of the NS equations by the proposed numerical method: modulo the initial value error $\omega(x,0)-\bar{\omega}(x,0)$,
simulating $\bar{\omega}$ is equivalent to simulating the continuous NS equations with the added source term $\s(x,t)$.

\subsection{$\s$ as a measure of information loss}
$\s$ can also be interpreted as a measure of information loss. To describe this let $t_0\geq 0$ and let $q_a(t)$ be the trajectory of the particle driven by the flow $\bar{u}(x,t)$  ($\dot{q}_a(t)=\bar{u}\big(q_a(t),t\big)$) and
started at time $t_0$ at an arbitrary point $x\in \T^d$. Let $W_a(t)=\bar{\omega}\big(q_a(t),t\big)$ be the predicted vorticity at $q_a(t)$.
Let $q_e:=(q^\star,q_a)$ (resp. $W_e:=(W^\star,W_a)$) be the vector of particle locations obtained by concatenating $q^\star$ with $q_a$ (resp. $W^\star$ with $W_a$). Then the identity
\begin{equation}
u^\star(x,q^\star,W^\star)=u^\star(x,q_e,W_e)
\end{equation}
implies that
$(q_a,W_a)$ does not carry (additional) information on the approximation of the flow given the information contained in $(q^\star,W^\star)$.
Now, let $W_b$ be the solution of
 \begin{align}
{\bf (d=2)}\quad &\dot{W}_b(t)=\nu \Delta \curl\, u^\star\big(q_a(t),q^\star,W^\star\big)+g(q_a(t),t) \,,\label{eqY2dlo}\\
{\bf (d=3)}\quad &\dot{W}_b(t)=\nu \Delta \curl\, u^\star\big(q_e(t),q^\star,W^\star\big)+W_b(t) \nabla u^\star\big(q_e(t),q^\star,W^\star\big)+g(q_e(t),t) \label{eqY3dlo}\,,
\end{align}
with initial condition $W_b(t_0)=W_a(t_0)$, then the identity
\begin{equation}
W_b(t)-W_a(t)=\s(x,t) (t-t_0)+o(t-t_0)
\end{equation}
implies that $\big|\s(x,t)\big|$ can be interpreted as the instantaneous rate of information gain
 at time $t_0$ resulting from adding a particle at $x$ and letting $W_b$ be driven by the GPH equations. Similarly, $\s$
Equivalently  $\big|\s(x,t)\big|$ can be interpreted as the rate of information loss resulting from the absence of an additional particle at location $x$.
Therefore, to minimize information loss, the number of particles in GPH could be dynamically increased by adding new particles at locations $x$ where
$\big|\s(x,t)\big|$ is maximized (and a similar notion of information loss can be derived for removing particles).

\subsection{Turbulence as information loss}
How do you define and quantify turbulence? Observe that the current popular definition as
``the complex, chaotic motion of a fluid'' \cite{phillips2018turbulence} is not only empirical but also relative to the scale at which the flow is observed (the flow may appear laminar at fine scales and chaotic at coarse scales). From the GPH perspective, turbulence can be defined as the information loss incurred by approximating the dynamic of the continuous flow with the discrete information contained in $(q^\star,W^\star)$. In that sense, it is local quantity measured as $\big|\s(x,t)\big|$  and its definition relative to the information already contained $(q^\star,W^\star)$.

\begin{figure}[h]
    \centering
        \includegraphics[width=\textwidth ]{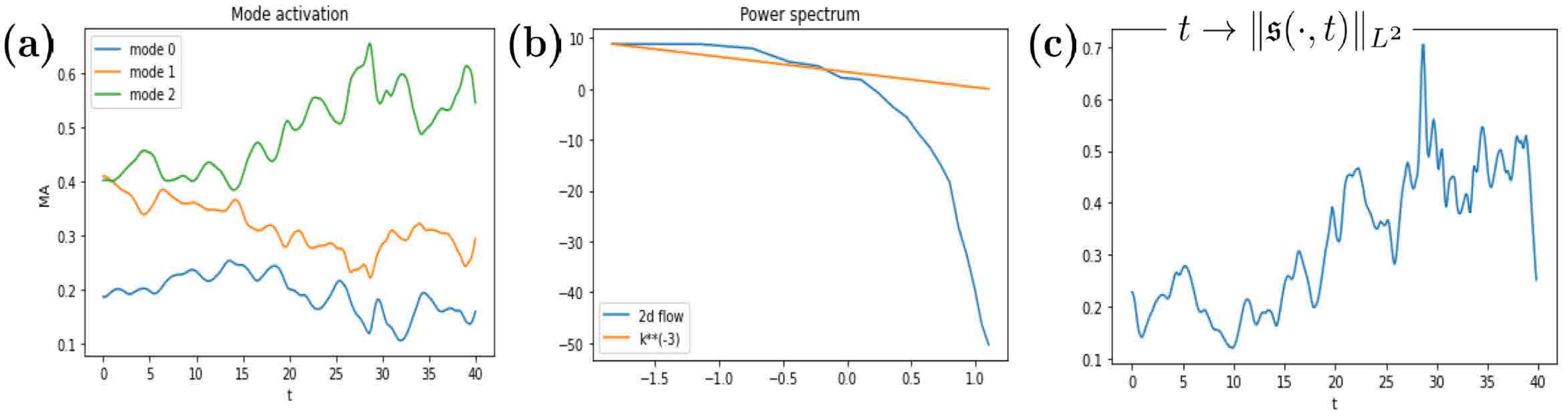}
    \caption{$\nu=0$. (a) Mode activation (b) Power spectrum (c) Source term error $t\rightarrow \|\s(\cdot,t)\|_{L^2}$.}
    \label{fiex1}
\end{figure}

\begin{figure}[h]
    \centering
        \includegraphics[width=\textwidth ]{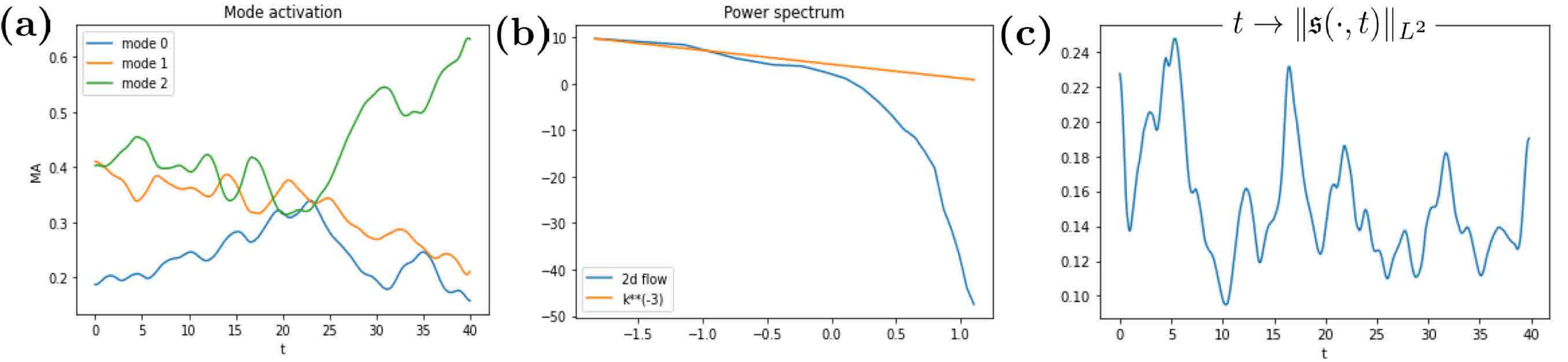}
    \caption{$\nu=0.001$. (a) Mode activation (b) Power spectrum (c) Source term error $t\rightarrow \|\s(\cdot,t)\|_{L^2}$.}
    \label{fiex1p2}
\end{figure}

\begin{figure}[h]
    \centering
        \includegraphics[width=\textwidth ]{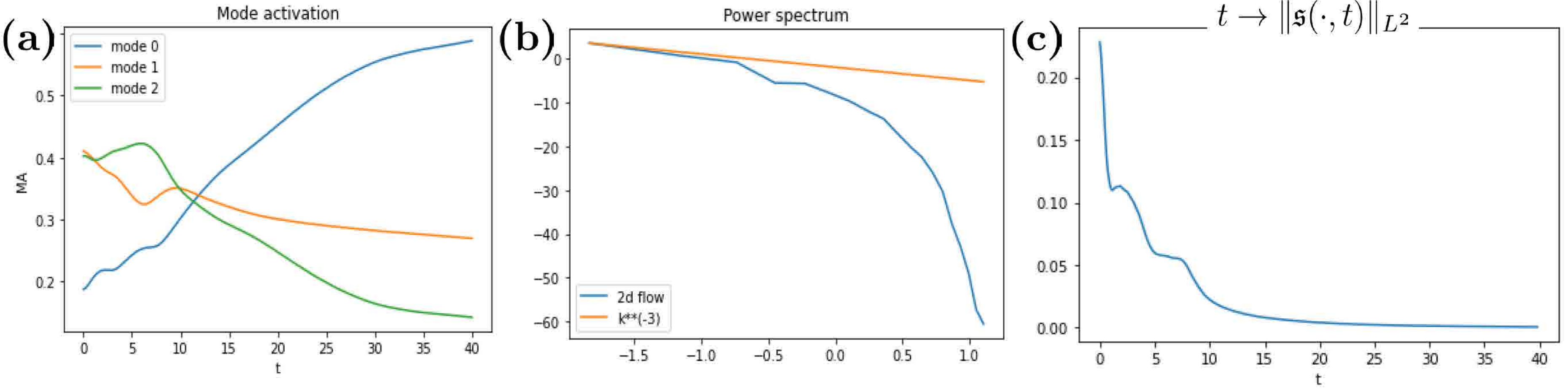}
    \caption{$\nu=0.01$. (a) Mode activation (b) Power spectrum (c) Source term error $t\rightarrow \|\s(\cdot,t)\|_{L^2}$.}
    \label{fiex1p3}
\end{figure}

\begin{figure}[h]
    \centering
        \includegraphics[width=\textwidth ]{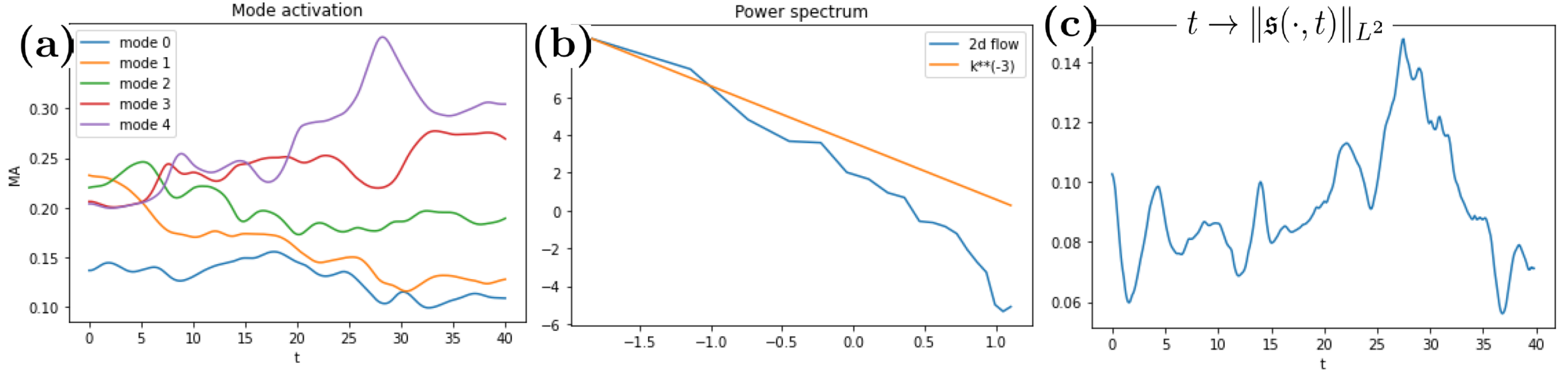}
    \caption{$\nu=0$. $5$ modes. (a) Mode activation (b) Power spectrum (c) Source term error $t\rightarrow \|\s(\cdot,t)\|_{L^2}$.}
    \label{fiex1m5}
\end{figure}

\section{Numerical experiments}\label{numericalexp}
In the following experiments we use, in dimension {\bf $d=2$}, the additive kernel of Sec.~\ref{secsecrcas} and \ref{secpwlaws} with $m+1=3$ modes, and $\gamma=4$, $(\sigma_0,\sigma_1,\sigma_2)=(2,1,0.5)$.
We use $N=25$ particles,  zero-forcing ($f=0$), zero viscosity ($\nu=0$), and initialize the vorticity field at random by sampling the initial value of  $W$ from the distribution of the Gaussian vector with the identity covariance matrix.
Fig.~\ref{fiex2} and \ref{fiex3} show snapshots of the velocity field ($x\rightarrow \bar{u}(x,t)$) and the vorticity field ($x\rightarrow \bar{\omega}(x,t)$)
with the entries of $q(t)$ shown as particles.
Fig.~\ref{fiex4} and \ref{fiex5} show snapshots of  the  variance of the velocity field ($x\rightarrow \Tr[\C^u(x,x)]$) and the variance of the vorticity field ($x\rightarrow \Tr[\C^\omega(x,x)]$).  
Fig.~\ref{fiex6} shows snapshots of  the  source terms error ($x\rightarrow \s(x,t)$).
Fig.~\ref{fiex1}.(a) shows the mode activation of each of the three modes as defined by \eqref{eqjkhkwhb1ddrn}.
Fig.~\ref{fiex1}.(b) shows the power spectrum of the field generated by our simulation and its comparison with the $k^{-3}$ power spectrum associated with  2d turbulence. Fig.~\ref{fiex1}.(c) shows the source term error $t\rightarrow \|\s(\cdot,t)\|_{L^2}$ where
$\|\s(\cdot,t)\|_{L^2}^2:=|\T^d|^{-1}\int_{\T^d}\s^2(x,t)\,dx$.
The plots shown in Fig.~\ref{fiex1} are for zero viscosity  $\nu=0$. Fig.~\ref{fiex1p2} and \ref{fiex1p3} show similar plots for $\nu=0.001$ and $\nu=0.01$.
Fig.~\ref{fiex1m5} shows similar plots for $\nu=0$, $m+1=5$ modes,  $\gamma=4$ and $(\sigma_0,\sigma_1,\sigma_2)=(2,1,1/2,1/4,1/8)$.
Note that compared to Fig.~\ref{fiex1}.(c), the source term error $t\rightarrow \|\s(\cdot,t)\|_{L^2}$ is decreased by one order of magnitude, which supports the point that our structured multiscale kernel leads to increased accuracy as the number of modes is increased.
\begin{table}[h]
\centering
\begin{tabular}{|c||*{5}{c|}}\hline
\backslashbox{${\bf m+1}$ }{${\bf \gamma}$ }
&\makebox[3em]{-2}&\makebox[3em]{0}
&\makebox[3em]{2}&\makebox[3em]{4}&\makebox[3em]{6}\\\hline\hline
2  & 27.6 & 39.4 & 93.4 & 31.1 & 34.5 \\\hline
3  & 0.118 & 0.158 & 0.214 & 0.389 & 1.144 \\\hline
4  & 1.14 & 4.22  & 0.076 & {\bf 0.070} & 0.266 \\\hline
\end{tabular}
\caption{Space/time average source term error $\|\s\|_{L^2}$ as a function  the number of modes $m+1$ and the power-law parameter $\gamma$.}
\label{tablesl2vsmg}
\end{table}

Table \ref{tablesl2vsmg} gives the space/time-averaged source term error ($T=40$)
\begin{equation}
\|\s\|_{L^2}:=\sqrt{T^{-1}\int_{\T^d\times [0,T]} \s^2(x,t) \,dx\,dt}
\end{equation}
 as a function of the number of modes ($m+1$ in the additive kernel of Sec.~\ref{secsecrcas} and \ref{secpwlaws}) and the value of the parameter $\gamma$ entering in the power law \eqref{eqjhedhhded}. For that table, we have used $N=100$ particles,  zero-forcing ($f=0$), $\nu=0.001$, and have initialized the vorticity field at random by sampling the initial value of  $W$ from the distribution of the Gaussian vector with identity covariance matrix. Choosing the parameter $\gamma$ close to the one ($\gamma=4$) matching the Kolmogorov scaling law and increasing the number of modes $m+1$ significantly diminishes the source term error $\|\s\|_{L^2}$.
 With only one mode ($m+1=1$, not shown in the table), the kernel is $K$ is too stiff to handle the transfer of energy towards fine scales, and accuracy significantly deteriorates ($\|\s\|_{L^2}\sim 17,000$,  and, without regularization with a nugget, velocity bursts are observed as particles come close to each other).
 
 \begin{Remark}
The values of $\|\s\|_{L^2}$ are absolute in table \ref{tablesl2vsmg}, and Fig.~\ref{fiex1p3}, \ref{fiex1m5}, \ref{fiex1p2}, \ref{fiex1},  \ref{fiex6}. Our main purpose is to show the dependence of $\|\s\|_{L^2}$ as a function of the number of modes and the value of the parameter $\gamma$. In particular, those values can be made relative by dividing them by  $17,000$ (the value of $\|\s\|_{L^2}$ with only one mode).
 \end{Remark}
 
\begin{figure}[h]
    \centering
        \includegraphics[width=0.4\textwidth ]{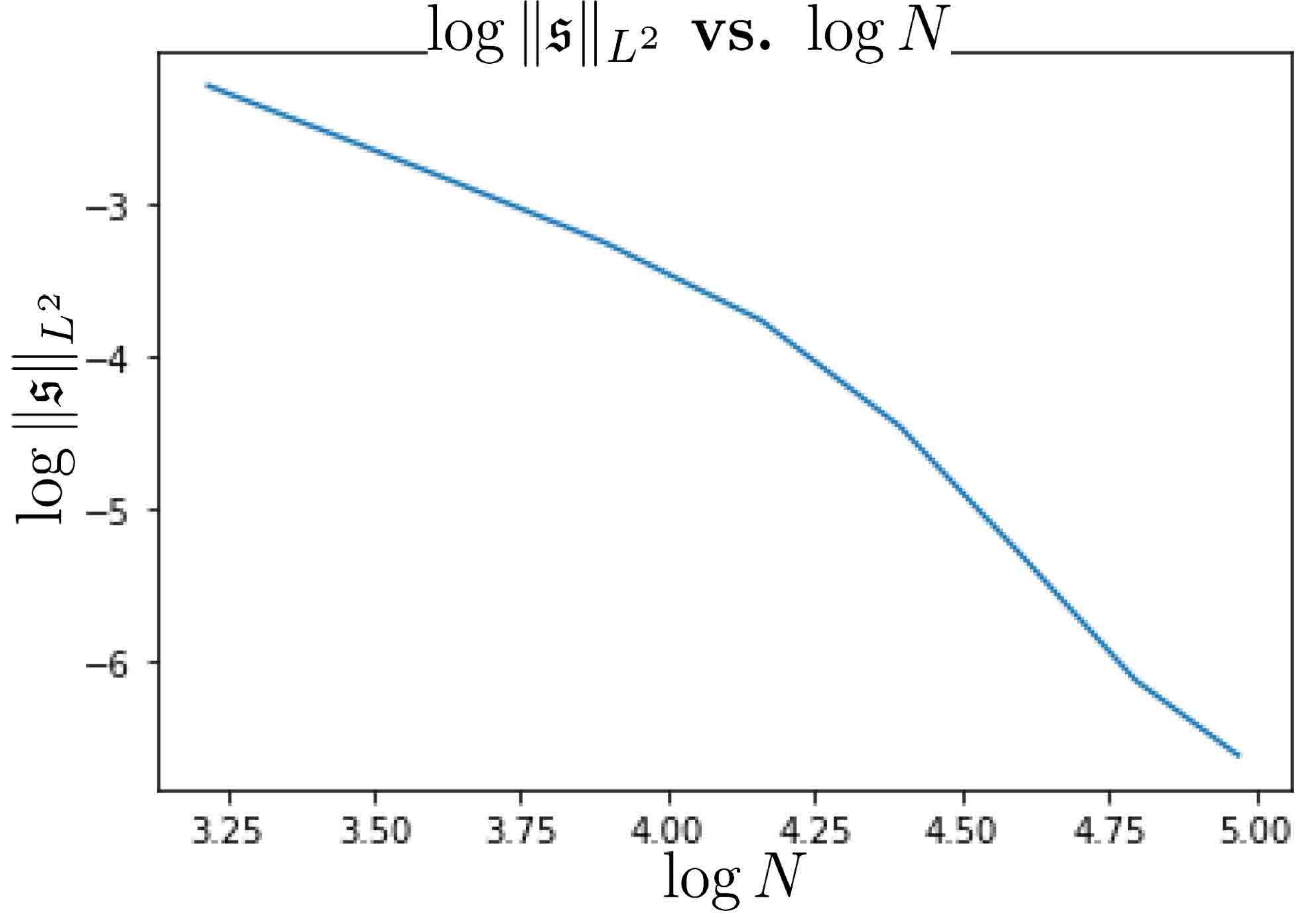}
    \caption{Error $\|\s\|_{L^2}$ vs. number of particles $N$ in log-log scale.}
    \label{figconvergence}
\end{figure}

 Fig.~\ref{figconvergence} illustrates the convergence of the method (as measured by $\|\s\|_{L^2}$) with respect to the number $N$ of particles. In that figure, with $W(0)=\omega_0(q(0))$, where the initial vorticity $\omega_0$ is chosen to be smooth and deterministic. The interpolation error in the approximation of the initial vorticity is not plotted (the analysis of this kernel interpolation error is classical \cite{owhadi2019operator}).

\begin{Remark}
The complexity of the method is proportional to the product between the number of time steps and the cost of inverting dense $N\times N$ kernel matrices. Although the sparse Cholesky factorization algorithms introduced in \cite{schafer2021sparse, SchaeferSullivanOwhadi17} could be adapted to potentially reduce the inversion cost to $\mathcal{O}(N \log^{2d}N)$, we have not employed this strategy here.
\end{Remark}
\begin{figure}[h]
    \centering
        \includegraphics[width=\textwidth ]{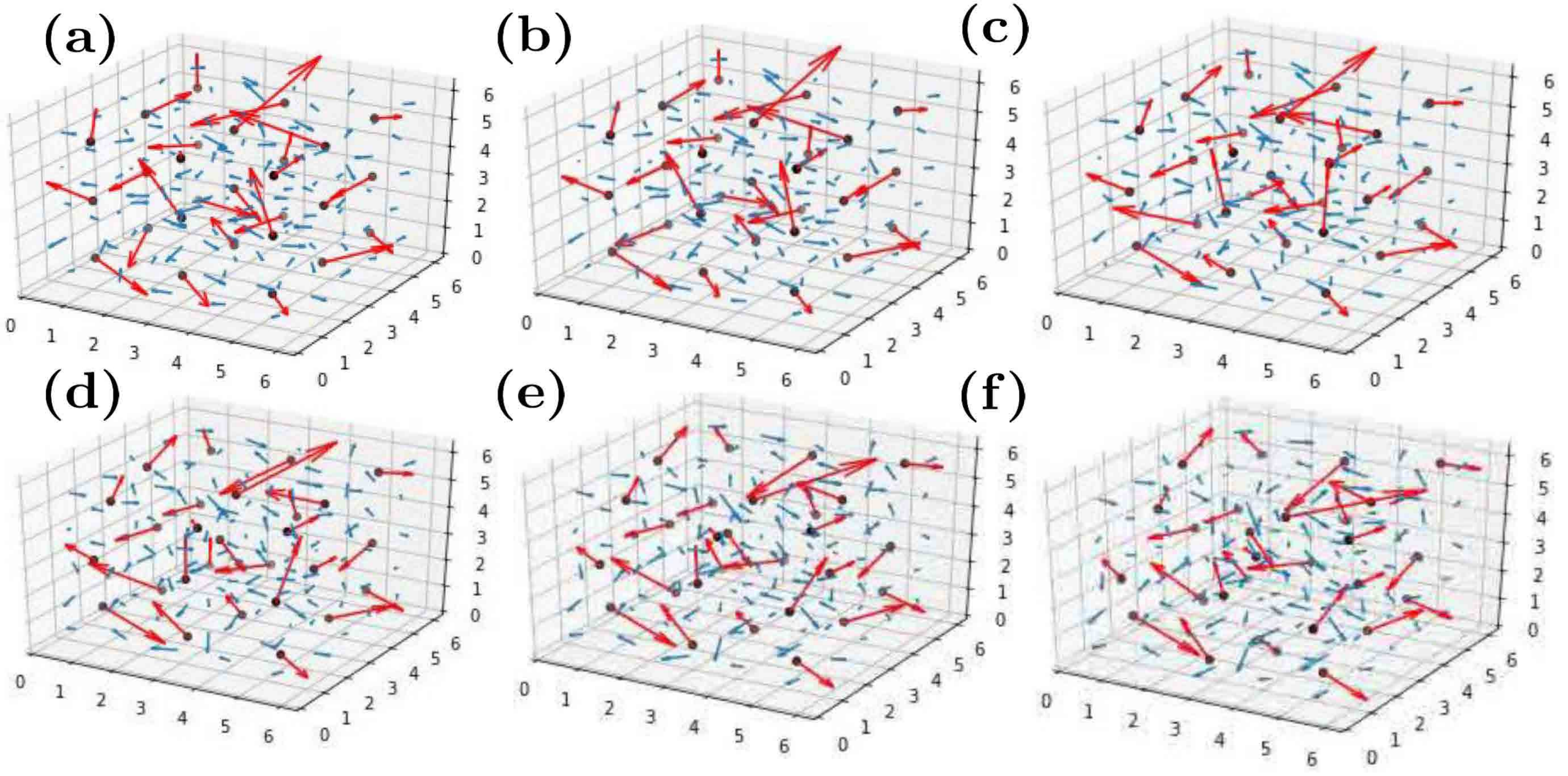}
    \caption{{\bf $d=3$}. Velocity and vorticity snapshots. Blue arrows show velocity, red arrows show vorticity $W$ at particle locations $q$. The kernel has $2$ modes.}
    \label{fiex3d2}
\end{figure}

\paragraph{{\bf The three-dimensional setting ($d=3$).}}
For the three-dimensional setting, we also use the additive kernel of Sec.~\ref{secsecrcas} and \ref{secpwlaws} with $m+1=1$ and $m+1=2$ modes,  $\gamma=2/3+2$, and $(\sigma_0,\sigma_1)=(2,1)$.
We use $N=9$ particles,  zero-forcing ($f=0$), non-zero viscosity ($\nu=0.001$), and initialize the vorticity field at random by sampling the initial value of  $W$ from the distribution of the Gaussian vector with the identity covariance matrix.
Fig.~\ref{fiex3d}  and \ref{fiex3d2} show snapshots of the velocity field ($x\rightarrow \bar{u}(x,t)$) and the vorticity $W$ at locations
 $q(t)$. Fig.~\ref{fiex3d} and Fig.~\ref{fiex3d2} employ one and two modes, respectively. The added mode increases the effective viscosity of the dynamic by acting as an energy sink.
 Compared to the two-dimensional setting, the three-dimensional ODE formulation of GPH has a quadratic term in $W^\star$ in \eqref{eqodesys2dt} that can lead to blowup in finite time. We do numerically observe this blowup and dampen the vortex stretching
 component of this quadratic term\footnote{Writing $w_i=w_{i,\parallel}+w_{i,\perp}$ for the orthogonal decomposition of
 $W^\star_i
\nabla_x \curl_x^T \L_y G(q_i^\star,q^\star) \big(\L_x \L_y G(q^\star,q^\star)\big)^{-1} W^\star$ into its projection along the direction of $W^\star_i$ and its orthogonal complement, we replace $w_i$ by $w_i=(1-\alpha) w_{i,\parallel}+w_{i,\perp}$.} by a factor $1-\alpha$ (with $\alpha\in [0,1)$) to avoid blow-up. Other strategies for avoiding a blowup in the numerical of the NS and Euler equations include numerical dissipation, and Lagrangian averaging \cite{marsden2003anisotropic}.
Although it is known that the three-dimensional Euler equations with boundary and smooth initial data can blow up, the blowup of the three-dimensional NS equations remains an open problem. Therefore addressing the possible blowup of \eqref{eqodesys2dt} in a manner that has better consistency with the underlying physics of turbulence remains an open problem that may require a modeling step (i.e., correcting the NS equations).
We also note that if the continuous three-dimensional NS equations do indeed blow up, then the solution obtained with GPH will exit the RKHS defined by a smooth kernel.
Finally, GPH may also serve as a potential candidate for identifying a singularity formation in the solution of the three-dimensional NS equations: if there exists a trajectory $(q^\star,W^\star)$ and a (possibly time-dependent) kernel $G$, such that
 $W^\star$ blows up in  finite time while
$\s$ in \eqref{eqvor3ds} (with $g=0$) remains smooth; then the NS equations do blow up in finite time
\cite{fefferman2000existence}.

\section{Further discussions}
\subsection{Choosing the kernel when the physics is unknown}\label{secchoker}

The approach proposed in this paper is to design the kernel to satisfy known physics. When the underlying physics is unknown, then the kernel can be learned from data via cross-validation/MLE in a given (possibly non-parametric) family of kernels
\cite{owhadi2019kernel, chen2020consistency, owhadi2020ideas}. The Kernel Flows  (a variant of cross-validation) approach \cite{owhadi2019kernel} has been shown to be efficient for learning (possibly stochastic) dynamical systems
\cite{hamzi2021learning, hamzi2021simple, darcy2021learning, lee2021learning, darcyone} and designing surrogate models \cite{prasanth2021kernel, susiluoto2021radiative, akian2022learning}. In particular, this approach has been shown to compare favorably to ANN-based methods (both in terms of complexity and accuracy) for weather/climate prediction using real satellite data \cite{hamzi2021simple}.

\subsection{GPH and ANN-based simulations}
The purpose of this manuscript is not to compare GPH against ANN-based methods for solving the NSE (we refer to \cite{chen2021solving} for such comparisons for general PDEs) but to highlight the fact that GP-based methods allow for incorporating the physics not solely through enforcing the PDE at a finite number of collocation points/particles but also through the choice and design of the kernel.
This being said, our analysis and results can be extended to derive an ANN variant of GPH. This variant can be obtained by simply defining the scalar-valued kernel $G$ introduced in Sec.~\ref{secdivfreegp} as
\begin{equation}\label{eqlwkldkdk}
G(x,x')=\psi_\theta^T(x) \psi_\theta(x')\,,
\end{equation}
where $\psi_\theta(x)$ is the output of an ANN, i.e., a function mapping $x$ to a finite-dimensional vector space parameterized by the parameters $\theta$ inner layers of a neural network. \eqref{eqlwkldkdk} then defines a parameterized kernel whose parameterized can be learned from data as described in Sec.~\ref{secchoker}.

\subsection{Uncertainty Quantification and Data Assimilation}
Since the proposed approach has a Bayesian interpretation, it naturally enables data assimilation and making predictions and estimations based on mixing simulation data with experimental data. To describe this assume that, in that to the information $(q,W)$ obtained from the simulation, we have access (as functions of time) to velocity measurements $v_1,\ldots,v_M$ at locations $z_1,\ldots,z_M$ (that may be time-dependent).
Then GPH can be modified to incorporate this information. To describe this, write,
\begin{equation}\label{eqswesehge}
u^\star\big(x,q,W, z, v\big):=\E\big[\xi(x)\big| \curl\, \xi(q)=W \text{ and } \xi(z)=v\big]\,.
\end{equation}
This modification can then be summarized as
approximating $u(x,t)$ with
\begin{equation}
\bar{u}(x,t):=u^\star(x,q^\star(t),W^\star(t), z, v)\,,
\end{equation}
and  $\omega(x,t)$ with
\begin{equation}
\bar{\omega}(x,t):=\curl\, u^\star(x,q^\star(t),W^\star(t), z,v)\,,
\end{equation}
where $(q^\star,W^\star)$ is the solution of the autonomous system of ODEs
 \begin{equation}\label{eqodesyedds}
  \begin{cases}
&\dot{q}^\star_i=u^\star\big(q_i^\star,q^\star,W^\star, z, v\big)\\
{\bf (d=2)}\quad &\dot{W}_i^\star(t)=\nu \Delta \curl\, u^\star\big(q_i^\star,q^\star,W^\star, z, v\big)+g(q_i^\star(t),t) \,,\\
{\bf (d=3)}\quad &\dot{W}_i^\star(t)=\nu \Delta \curl\, u^\star\big(q_i^\star,q^\star,W^\star, z, v\big)+W_i^\star \nabla u^\star\big(q_i^\star,q^\star,W^\star, z, v\big)+g(q_i^\star(t),t)\,,
\end{cases}
\end{equation}
with the initial condition $(q^\star,W^\star)(0)=(q,W)(0)=(q^0,\omega_0(q^0))$. Note that this modification is equivalent to
replacing the distribution of the GP $\xi$ in Sec.~\ref{secGPH} by that of a non-centered time dependent GP  with mean
$\E\big[\xi(x)\big|  \xi(z)=v\big]$ and covariance function defined as the conditional covariance of $\xi$ conditioned on $ \xi(z)=v$.
Representer formulas can naturally be obtained as in Sec.~\ref{secdivfreegp}. Other experimental measurements may be incorporated (e.g., vorticities at specific locations). Furthermore, using the proposed approach,  velocity and pressure fields can be learned from flow visualizations
as in \cite{raissi2020hidden}, with the advantage of also recovering uncertainties (whole posterior distributions) in addition to those fields.
To describe this, assume that one has access (as functions of time) to the values $y_1,\ldots,y_M$  at locations $z_1,\ldots,z_M$ of the concentration $c$ of a passive tracer satisfying the transport PDE
\begin{equation}
\partial_t c + u\cdot \nabla c=D \Delta c\,.
\end{equation}
Let $\Gamma$ be a smoothing scalar valued kernel and $\zeta \sim \cN(0,\Gamma)$. Write 
\begin{equation}
\bar{c}(x,t):=\E[\zeta(x)|\zeta(z)=y(t)]\,,
\end{equation}
and
\begin{equation}\label{eqswesehgeb}
u^\star\big(x,q,W, t\big):=\E\big[\xi(x)\big| \curl\, \xi(q)=W \text{ and } \partial_t \bar{c}(z,t) + \xi(z)\cdot \nabla \bar{c}(z,t)=D \Delta \bar{c}(z,t)\big]\,.
\end{equation}
$u(x,t)$ can then be approximated with
\begin{equation}
\bar{u}(x,t):=u^\star(x,q^\star(t),W^\star(t), t)\,,
\end{equation}
and  $\omega(x,t)$ with
\begin{equation}
\bar{\omega}(x,t):=\curl\, u^\star(x,q^\star(t),W^\star(t), t)\,,
\end{equation}
where $(q^\star,W^\star)$ is the solution of the autonomous system of ODEs
 \begin{equation}\label{eqodesededds}
  \begin{cases}
&\dot{q}^\star_i=u^\star\big(q_i^\star,q^\star,W^\star, t\big)\\
{\bf (d=2)}\quad &\dot{W}_i^\star(t)=\nu \Delta \curl\, u^\star\big(q_i^\star,q^\star,W^\star, t\big)+g(q_i^\star(t),t) \,,\\
{\bf (d=3)}\quad &\dot{W}_i^\star(t)=\nu \Delta \curl\, u^\star\big(q_i^\star,q^\star,W^\star, t\big)+W_i^\star \nabla u^\star\big(q_i^\star,q^\star,W^\star, t\big)+g(q_i^\star(t),t)\,,
\end{cases}
\end{equation}
with the initial condition $q^\star(0)=q^0$ and 
\begin{equation}
W^\star(0)=\E\big[\curl \xi(q^0)\big| \partial_t \bar{c}(z,0) + \xi(z)\cdot \nabla \bar{c}(z,0)=D \Delta \bar{c}(z,0)\big]\,.
\end{equation}

\subsection*{Acknowledgments}
The author gratefully acknowledges partial support from the Air Force Office of Scientific Research under MURI award number FA9550-20-1-0358 (Machine Learning and Physics-Based Modeling and Simulation) and from the Department of Energy under award number DE-SC0023163 (SEA-CROGS: Scalable, Efficient and Accelerated Causal Reasoning Operators, Graphs and Spikes for
Earth and Embedded Systems). The author also thanks two anonymous referees for comments and suggestions.

\bibliographystyle{plain}
\bibliography{merged,RPS,extra,kmd}

\end{document}